\documentclass[10pt,twocolumn,twoside]{IEEEtran}

\usepackage[T1]{fontenc}
\usepackage{microtype}
\usepackage{amsmath,amsthm}
\allowdisplaybreaks[4]
\usepackage{amssymb}
\usepackage{wasysym}
\usepackage{framed}
\usepackage{mathtools}
\mathtoolsset{showonlyrefs=false,showmanualtags=true}
\usepackage{verbatim}
\usepackage{graphicx}
\usepackage[dvipsnames,svgnames,x11names]{xcolor}
\usepackage{url}

\usepackage{cite}
\usepackage{enumerate}
\usepackage{enumitem}
\setlist[itemize]{leftmargin=*}
\usepackage{epstopdf}
\usepackage{tikz}
\usepackage{tikz-3dplot}
\usepackage[colorlinks=false, pdfborder={0 0 0}]{hyperref}

\definecolor{crimsonred}{RGB}{153,0,0}		
\definecolor{darkcharcoal}{RGB}{25,25,25}		
\definecolor{charcoal}{RGB}{51,51,51}		
\definecolor{ash}{RGB}{100,100,100}			
\definecolor{paleblue}{RGB}{0,102,102}		
\definecolor{turtlegreen}{RGB}{51,153,0}	
\definecolor{paleale}{RGB}{204,204,102}		
\definecolor{lager}{RGB}{140,110,10}		
\definecolor{regal}{RGB}{90,0,120}			
\definecolor{jeans}{RGB}{20,30,150}			
\definecolor{ALERT}{RGB}{153,0,0}			
\definecolor{Alert}{RGB}{51,153,0}			
\definecolor{alert}{RGB}{140,110,10}		
\definecolor{charcoal}{RGB}{80,80,80}		
\definecolor{comment}{RGB}{51,51,51}		
\definecolor{Comment}{RGB}{100,100,100}		
\definecolor{COMMENT}{RGB}{80,20,120}		
\definecolor{highlight}{RGB}{20,30,150}		

\newcommand{\tb}{\textcolor{Black}}

\newcommand{\novec}{{}}
\renewcommand{\matrix}{{}}

\newcommand{\mL}{\mathcal{L}}
\newcommand{\mN}{\mathcal{N}}

\newcommand{\mG}{\mathcal{G}}
\newcommand{\mK}{\mathcal{K}}

\renewcommand{\star}{*}

\newtheorem{theorem}{Theorem}[section]
\newtheorem{lemma}[theorem]{Lemma}

\title{Mechanism design for resource allocation in networks with intergroup competition and intragroup sharing}


\author{Abhinav Sinha, \emph{Student Member, IEEE}  and  Achilleas Anastasopoulos, \emph{Senior Member, IEEE}
\thanks{This work was supported by the National Science Foundation by Grant CIF-1111061.}
\thanks{This paper has initially appeared on July 2013 as a technical report {arXiv:1307.2569} with last revision on March 2017.}
\thanks{The authors are with the
Department of Electrical Engineering and Computer Science,
University of Michigan, Ann Arbor, MI 48109, USA
(e-mail: \texttt{\{absi,anastas\}@umich.edu}).}
}

\begin{document}

	\maketitle
	
	
	\begin{abstract}
We consider a network where strategic agents, who are contesting for allocation of resources, are divided into fixed groups.
The network control protocol is such that within each group agents get to share the resource and across groups they contest for it.
A prototypical example is the allocation of data rate on a network with multicast/multirate architecture.
Compared to the unicast architecture (which is a special case), the multicast/multirate architecture can result in substantial bandwidth savings.
However, design of a market mechanism in such a scenario requires dealing with both private and public good problems as opposed to just private goods in unicast.

The mechanism proposed in this work ensures that social welfare maximizing allocation on such a network is realized at all Nash equilibria (NE) i.e., full implementation in NE. In addition it is individually rational, i.e., agents have an incentive to participate in the mechanism.
The mechanism, which is constructed in a quasi-systematic way starting from the dual of the centralized problem, has a number of useful properties.
Specifically, due to a novel allocation scheme, namely ``radial projection'', the proposed mechanism results in feasible allocation even off equilibrium. This is a practical necessity for any realistic mechanism since agents have to ``learn'' the NE through a dynamic process.
%
%
Finally, it is shown how strong budget balance at equilibrium can be achieved with a minimal increase in message space as an add-on to a weakly budget balanced mechanism.
	\end{abstract}

	

	\section{Introduction} \label{Intro}


	\subsection{Motivation}
	
	Design of mechanisms that fully implement Walrasian and Lindahl allocations in Nash equilibrium (NE) have been extensively studied in the literature, e.g. \cite{groves, hurwicz1979outcome, tian1989implementation, postlewaite1989feasible, hurwicz1994economic, maskin2002implementation}.
	These two ubiquitous examples present two different aspects in market design - private and public goods, respectively.
	Recently, generalizing the original contributions described above, a number of works addressed the problem of full implementation in NE of social utility maximization under linear inequality constraints~\tb{ \cite{basar,stoenescu06,rahuljain,kakhbodcorrection,bhattacharya2014,SiAn15b}.}
	In the area of computer/communication networks, the prototypical example of this problem is the allocation of rate (bandwidth) in a ``unicast'' network architecture on the Internet, as depicted in Fig.~\ref{fig:unicast}.
	\begin{figure}[htbp]
		\centering
\scalebox{0.7}{
\begin{tikzpicture}
\node[draw, color=white] (m1) at (9.5,2.05) { \textcolor{blue}{\scriptsize $ {x_{2}}  $} };
\node[draw, rounded corners] (t1) at (0.5,2.1) {\textcolor{black}{T1}};
\node[draw, rounded corners] (t2) at (0.5,0.7) {\textcolor{black}{T2}};
%
\node[draw, color=red, rounded corners] (r1) at (10.5,2.6) {\textcolor{black}{R1}};
\node[draw, color=white] (r1n) at (11.5,2.6) { \textcolor{red}{\footnotesize $ x_{1} $} };
\node[draw, color=blue, rounded corners] (r2) at (10.5,1.8) {\textcolor{black}{R2}};
\node[draw, color=white] (r2n) at (11.5,1.8) { \textcolor{blue}{\footnotesize $ x_{2} $} };
\node[draw, color=Alert, rounded corners] (r3) at (10.5,1) {\textcolor{black}{R3}};
\node[draw, color=white] (r3n) at (11.5,1) { \textcolor{Alert}{\footnotesize $ x_{3} $} };
\node[draw, color=orange, rounded corners] (r4) at (10.5,0.2) {\textcolor{black}{R4}};
\node[draw, color=white] (r4n) at (11.5,0.2) { \textcolor{orange}{\footnotesize $ x_{4} $} };
\node[draw] (n13) at (8.5,2) {C};
\node[draw] (n21) at (2.5,1) {A};
\node[draw] (n22) at (5.5,1) {B};
\draw (0,-0.3) rectangle (1,3) ;
\node at (0.5, 3.4) {Tx} ;
\draw (10,-0.3) rectangle (11,3) ;
\node at (10.5, 3.4) {Rx} ;
\draw[color=red, line width=0.5pt] [-] (n21) -- (n22) -- (n13) -- (r1);
\draw[color=blue, line width=0.5pt] [-] (n21) -- (n22) -- (n13) -- (r2);
\draw[color=Alert, line width=0.5pt] [-] (n21) -- (n22) -- (n13) -- (r3);
\draw[color=orange, line width=0.5pt] [-] (n21) -- (n22) -- (r4);
\draw[line width=1pt,color=red] [-] (t1) -- (1.5,1.55)
node[pos=1,above] {\small~~~~~~\textcolor{Red}{$ {x_1} $}~$ + $~\textcolor{blue}{$ {x_2} $}} ;
\draw[line width=1pt,color=red] [-] (1.5,1.55) -- (n21) ;
\draw[line width=1pt,color=Red] [-] (n22) -- (6.5,1.33);
\draw[line width=1pt,color=blue] [-] (6.5,1.33) -- (7.5,1.66);
\draw[line width=1pt,color=Alert] [-] (7.5,1.66) -- (n13);
\node at (7,2) {\small \textcolor{Red}{$ {x_1} $}~$ + $~\textcolor{blue}{$ {x_2} $}~$ + $~\textcolor{Alert}{$ {x_3} $}};
\draw[line width=1pt,color=Red] [-] (n21) -- (3.4,1)
node[pos=0.5,below] {\small $ {x_1} $};
\node at (3.4,0.75) {\small $ + $};
\draw[line width=1pt,color=blue] [-] (3.4,1) -- (4.1,1)
node[pos=0.5,below] {\small $ {x_2} $};
\node at (4.05,0.75) {\small $ + $};
\draw[line width=1pt,color=Alert] [-] (4.1,1) -- (4.6,1)
node[pos=0.5,below] {\small $ {x_3} $};
\node at (4.65,0.75) {\small $ + $};
\draw[line width=1pt,color=orange] [-] (4.6,1) -- (n22)
node[pos=0.5,below] {\small $ {x_{4}} $} ;
\draw[line width=1pt,color=orange] [-] (n22) -- (r4)
node[pos=0.5,below] {\scriptsize $ {x_{4}} $} ;
\draw[line width=1pt,color=blue] [-] (1.5,1.55) -- (n21) ;
\draw[line width=1pt,color=Alert] [-] (t2) -- (1.5,0.85) 
node[pos=1,below] {\small~~~\textcolor{Alert}{$ {x_3} $}~$ + $~\textcolor{orange}{$ {x_4} $}} ;
\draw[line width=1pt,color=orange] [-] (1.5,0.85) -- (n21) ;
\draw[line width=1pt,color=red] [-] (n13) -- (r1)
node[pos=0.5,above] {\scriptsize $ {x_{1}}  $};
\draw[line width=1pt,color=blue] [-] (n13) -- (r2) ;
\draw[line width=1pt,color=Alert] [-] (n13) -- (r3)
node[pos=0.5,below] {\scriptsize $ {x_{3}}  $};
\end{tikzpicture}
}
		\caption{\tb{Network with ``unicast'' architecture}}
		\label{fig:unicast}
\scalebox{0.7}{
\begin{tikzpicture}
\node[draw, color=white] (m1) at (4,0.5) {  $ \textcolor{blue}{\max\{x_{1},{x_{2}}\}} $ };
\node[draw, color=white] (m5) at (4,0) {  {\textcolor{black}{$ + $}}   \textcolor{Alert}{ $ \max\{{x_{3},x_{4}}\} $ } };
\node[draw, color=blue, rounded corners] (t1) at (0.5,2.1) {\textcolor{black}{T1}};
\node[draw, color=Alert , rounded corners] (t2) at (0.5,1) {\textcolor{black}{T2}};
%
%
\node[draw, color=blue, rounded corners] (r1) at (10.5,2.6) {\textcolor{black}{R1}};
\node[draw, color=white] (r1n) at (11.5,2.6) { \textcolor{blue}{\footnotesize $ x_{1} $} };
\node[draw, color=blue, rounded corners] (r2) at (10.5,1.8) {\textcolor{black}{R2}};
\node[draw, color=white] (r2n) at (11.5,1.8) { \textcolor{blue}{\footnotesize $ x_{2} $} };
\node[draw, color=Alert, rounded corners] (r3) at (10.5,1) {\textcolor{black}{R3}};
\node[draw, color=white] (r3n) at (11.5,1) { \textcolor{Alert}{\footnotesize $ x_{3} $} };
\node[draw, color=Alert, rounded corners] (r4) at (10.5,0.2) {\textcolor{black}{R4}};
\node[draw, color=white] (r4n) at (11.5,0.2) { \textcolor{Alert}{\footnotesize $ x_{4} $} };
\node[draw, color=white] (m2) at (6.5,2) { \textcolor{blue}{\small $ \max\{{x_{1},x_{2}}\} \: \textcolor{black}{+} \: \textcolor{Alert}{{x_{3}}} $} };
\node[draw, color=white] (m1) at (9.5,2.05) { \textcolor{blue}{\scriptsize $ {x_{2}}  $} }; 
\node[draw] (n13) at (8.5,2) {C};
\node[draw] (n21) at (2.5,1) {A};
\node[draw] (n22) at (5.5,1) {B};
\draw (0,-0.3) rectangle (1,3) ;
\node at (0.5, 3.4) {Tx} ;
\draw (10,-0.3) rectangle (11,3) ;
\node at (10.5, 3.4) {Rx} ;
\draw[color=blue, line width=0.5pt] [-] (n22) -- (n13) -- (r1);
\draw[color=blue, line width=0.5pt] [-] (n21) -- (n22) -- (n13) -- (r2);
\draw[color=Alert, line width=0.5pt] [-] (t2) -- (n21) -- (n22) -- (n13) -- (r3);
\draw[color=Alert, line width=0.5pt] [-] (t2) -- (n21) -- (n22) -- (r4);
\draw[line width=1pt,color=blue] [-] (n22) -- (7,1.5) ;
\draw[line width=1pt,color=Alert] [-] (7,1.5) -- (n13);
\draw[line width=1pt,color=blue] [-] (n21) -- (4,1);
\draw[line width=1pt,color=Alert] [-] (4,1) -- (n22) ;
\draw[line width=1pt,color=Alert] [-] (n22) -- (r4)
node[pos=0.5,below] {\scriptsize $ {x_{4}}  $};
\draw[line width=1pt,color=blue] [-] (t1) -- (n21) ;
\node[color=blue] at (2,2) {\small $ {\max\{x_{1},x_{2}\}}  $};
\node[color=Alert] at (1.8,0.5) {\small $ {\max\{x_{3},x_{4}\}}  $};
\draw[line width=1pt,color=Alert] [-] (t2) -- (n21) ;
\draw[line width=1pt,color=Alert] [-] (t2) -- (n21) ;
\draw[line width=1pt,color=blue] [-] (n13) -- (r1)
node[pos=0.5,above] {\scriptsize $ {x_{1}}  $};
\draw[line width=1pt,color=blue] [-] (n13) -- (r2) ;
\draw[line width=1pt,color=Alert] [-] (n13) -- (r3)
node[pos=0.5,below] {\scriptsize $ {x_{3}}  $};
\end{tikzpicture}
}
		\caption{\tb{Network with ``multicast/multirate'' architecture}}
		\label{fig:multicast} 	
	\end{figure}
\tb{In this context, an agent is a receiver (e.g., $ \{\text{Ri}\}_{\text{i}=1,2,3,4} $ in Fig.~\ref{fig:unicast}), who communicates with his respective transmitter via a fixed route consisting of links
	(e.g., R2 communicates with T1 via the route consisting of the T1-A, A-B, B-C, C-R2 links, whereas R3 communicates with T2 via the T2-A, A-B, B-C, C-R3 links). The scenario depicted in Fig.~\ref{fig:unicast} is such that receivers R1, R2 request the same data (e.g., the same movie), which is transmitted by transmitter T1. Similarly R3, R4 request the same data  which is transmitted by transmitter T2.
	The term ``unicast'' refers to the fact that the network establishes separate connections for each agent in each link
	(even when agents communicate the same content), and thus each agent loads each link with the amount of bandwidth it is allocated. The inequality constraints quantify the capacity constraints at each link (e.g., in link A-B, $x_1+x_2+x_3+x_4\leq c_{AB}$, where $c_{AB}$ is the capacity of link A-B and $x_i$'s are the allocated rates for the four agents sharing this link).
	Thus the unicast problem is a pure private goods problem with one good (i.e., rate) to be allocated  and multiple constraints on the same good (i.e., one capacity constraint per link).}

	There are however a number of interesting problems in economics and engineering that do not fit the above general model, as they involve agents forming groups and entail intergroup competition and intragroup sharing.
	We present this class of problems in the context of communication networks and in particular for the problem of rate allocation in a ``multicast/multirate'' architecture on the Internet, as depicted in Fig.~\ref{fig:multicast}.
	\tb{The model considered in this paper is such that for each receiver there is exactly one transmitter who communicates with it, whereas for each transmitter there can be multiple receivers who communicate with it.}
	\tb{Similarly to the unicast scenario, an agent is a receiver. Agents form multicast groups based on the content they communicate (e.g., \{R1, R2\} form a group and so do agents \{R3, R4\}). At the same time, agents within a group may request different bandwidth
	for the same content, where this differentiation is due to different quality of service requested by the users within a group, such as in high- vs standard-definition video.
	In the multicast protocol, in each link only a single connection is established for each multicast group carrying the corresponding content at the highest requested rate. This is motivated by the fact that any lower-quality content can always be derived from the higher-quality content and thus there is no need to further load the link other than with the highest rate (e.g., agents R1, R2, R3, R4 all communicate via link A-B, but only two distinct contents are transmitted on it, resulting in a capacity constraint of the form $\max\{x_{1},x_{2}\} + \max\{ x_{3} , x_{4} \} \leq c_{AB}$).}
	Such a unique architecture makes the multicast/multirate allocation problem qualitatively different from the unicast since there is sharing of bandwidth as well as competition for it.

	A second motivating scenario comes from the provision of data-security on server farms. Consider multiple server farms, each hosting data for several companies. Data security at each server farm is provided by simultaneous use of different security products out of a set $\mL$ of possible products. Due to different companies having different security needs, for any company $ i $ in server farm $ k $ achieving security level $ x_{ki} $ requires a profile $ \{ \alpha_{ki}^l x_{ki} \}_{l \in \mL} $ of quantities of different security products\footnote{The final security level can be thought of as being achieved by use of different security products via a Leontief-like production function~\cite[p.~49]{MWG}.}\tb{, where $ \alpha_{ki}^l $ are positive constants denoting the effectiveness of product $ l $ for company $ i $ on server farm $ k $}. However, since at each server farm the effectiveness of the security product is not additive, the quantity of any security product required at a server farm is dictated only by the maximum quantity/quality of that security product demanded amongst companies in that farm.
	Finally, allocation of security products among server farms is constrained by the limited quantity of each security product being available.

	Looking at the structure of both these problems, one sees both a private and a public goods aspect of market design in them.
	Referring to the multicast problem, due to the capacity constraints on the links of the network, allocation of rate for one content implies that such rate cannot \tb{be} allocated to agents sharing this link, who have requested a different content - this is the private good aspect resulting in intergroup competition.
	On the other hand for agents requesting the same content, since the allocation via the capacity constraint is dictated only by the highest rate from that group, others in the group can be allocated additional rate (up to the maximum requested in that group on that link) without having to affect anybody else's allocation - this is the public goods aspect resulting in intragroup sharing and the inevitable ``free-rider'' problem~\cite[sec.~11.C]{MWG}, albeit in a problem where consumption is still private.

\subsection{Contributions}

\tb{The goal of this paper is to design appropriate incentives, through allocation and taxes, such that when acting strategically i.e. at NE, the corresponding allocation to agents is efficient. This efficient allocation is the solution to a convex optimization problem: maximization of sum of agents' utilities subject to network multicast constraints. 
The taxes imposed by the mechanism refer to actual money paid by agents and not to virtual signals, as may be the case in works on distributed optimization.
Therefore our model assumes the presence of an authority capable of collecting taxes from the agents or disbursing subsidies to the agents (as dictated by the mechanism).}

\tb{For the problem of designing a mechanism that leads to a single-shot game, the well-accepted measure of complexity is the dimensionality of the message space of the proposed mechanism.} Similar to some of the works mentioned earlier, the mechanism presented here requires agents to communicate via announcing relatively ``small'' signals, which only consist of demands and prices.
However, contrary to some of the earlier works in communications, such as~\cite{rahuljain,hajekvcg,johari2009efficiency}, the work here ensures full implementation of social welfare maximizing allocation, so the designer can guarantee that only the most efficient outcome will be reached and no other. This is the first contribution of this work. \tb{Note that we avoid using direct mechanisms (see~\cite{nisan2007algorithmic,borgers2015book} for definition) since for this problem agents' private information is their utility function and it would be impractical to ask agents to quote entire functions. This also means that we do not use such terms  as ``incentive compatibility'' and ``truth-telling'' as they apply only to direct mechanisms.}

In addition to the stated goal of getting optimal allocations at all Nash equilibria (NE) and ensuring individual rationality (IR), the mechanism presented here has two auxiliary properties.
The first auxiliary property, and the second contribution of this work, is to achieve \textit{feasibility on and off equilibrium} via a ``radial projection'' allocation function. This property implies that the contract promises to allocate rates to agents in such a way that the link capacity constraints are never violated, \tb{contrary to the works in~\cite{stoenescu06,kakhbodcorrection,demosmulticorrection,demosshruti}}.
While, clearly, feasibility is satisfied by definition at NE, the proposed mechanism achieves it even off equilibrium. Off-equilibrium feasibility has much deeper implications in the context of networks as described below. The use of NE as a solution concept in the implementation literature can lead to problems in practice since calculation of NE  requires full information on part of the agents (which may not happen on an informationally and physically decentralized system like the Internet). Thus the justification for using NE (even for single shot games) is that for certain classes of learning dynamics, repeated play with the mechanism will ensure that the NE is ``learned'' eventually. Since information capacity constraints on a network are hard constraints and cannot be violated at any cost (in the short run), the above ``learning'' justification is applicable only when allocation is feasible throughout the learning process. Hence off-equilibrium feasibility becomes a necessity and the mechanism presented in this work is ``learning ready'' in that respect.

	Regarding the specific technique used to achieve off-equilibrium feasibility, radial projection refers to agents' demands being converted into allocations by scaling of the overall demand vector to the boundary of the feasible region defined by the capacity constraints. This allocation method subsumes as a special case the allocation function introduced in~\cite{basar,hajek} for communication on a single-link unicast network, and for stochastic control of networks in~\cite{kelly} (readers may be refer to~\cite{SiAn15b} for a full implementation mechanism that uses a similar allocation concept in the unicast problem).

	The second auxiliary property of the proposed mechanism and the final contribution of this work is to demonstrate how strong budget balance (SBB) at NE can be added to the non-budget balanced mechanism with the exchange of an extra signal (see Section~\ref{secmechown}). 
	This is in contrast to~\cite{demosmulticorrection}, where significant effort has been made to ensure budget balance off equilibrium for feasible allocations, while feasibility itself isn't ensured off equilibrium.
With the proposed modification, achieving SBB at NE becomes a relatively straightforward task. This is to be contrasted with the mechanisms proposed in~\cite{hajekvcg,johari2009efficiency} where SBB becomes a very difficult property to guarantee.

\subsection{Relevant literature}

The earliest unicast mechanisms were inspired by the seminal work~\cite{kelly}, which presents a mechanism with competitive equilibrium as the solution concept.
Mechanism design work for unicast network include~\cite{basarnash,basar,joharicongestion,hajek,rahuljain}, and specifically, Nash implementation for general unicast has been studied in~\cite{stoenescu06,kakhbodcorrection,SiAn15b}.
In all the above works, the network structure is assumed to be fixed; alternately~\cite{neely2009optimal} studies the relay problem where agents own links and can price the data going through them.

	Mechanism design has also been studied for power allocation problems in wireless networks. These are public goods problems, since the quality of service achieved by any agent depends not only on his signal strength but also that of neighboring agents due to interference from surrounding transmissions (see \cite{hong2012mechanism,demosshruti,alpcan_book2013}).

\tb{Readers may refer to~\cite{chen2002family,healy2012designing} for mechanisms designed to fully implement Walrasian and Lindahl allocations that are guaranteed to converge to the NE with a large class of learning dynamics. Both~\cite{chen2002family,healy2012designing} rely on the off-equilibrium properties - specifically that of the best response correspondence - to ensure a ``learning'' property in their mechanism. Authors in~\cite{bhattacharya2014} ensure that their mechanism induces a best response correspondence that has a partial truth-telling property: the allocation and price $ (x_i,p_i) $ quoted by agent $ i $, in her best response, are related to her utility $v_i(\cdot)$ as $ p_i = v_i^\prime(x_i) $ (even if $ x_i $ isn't equal to the optimal allocation $ x_i^\star $).
}

	In  multicast/multirate models without strategic agents, researchers have argued for different optimality criteria. One example is the \textit{max-min} fairness for multicast used in~\cite{rubenstein,srikantmulticast,sarkarfair,SaTa02}. On the other hand, the works in~\cite{sarkaropt,kar} have used integer and convex programming to get decentralized algorithms that maximize the sum of utilities. Stoenescu et al~\cite{stoenescu2007multirate} propose a realization algorithm which converges to optimal allocation.

	The remainder of this paper is structured as follows: in Section \ref{secCP}, the centralized problem that we wish to implement is stated and its solution is characterized. In Section \ref{secmech}, the mechanism is described and its properties are derived for the weak budget balance (WBB) case. This mechanism is modified in Section \ref{secmechown} to include strong budget balance at NE. Section \ref{secgen} discusses the relevant literature and some salient features of the mechanism.

	\section{Centralized Problem} \label{secCP}
	
	%

For concreteness, the exposition follows the prototypical example of rate
allocation in a multicast/multirate network architecture on the Internet.
	The system consists of a set $ \mN $ of Internet agents who communicate over a fixed multicast network. Each agent here is considered as a pair of source and destination users and the agents are divided into disjoint multicast groups based on the content that they communicate. The set of agents is described as $ \mN =\{(k,i) ~\vert~ k \in \mK, i \in \mG_k \} $, where the set of multicast groups is denoted by $ \mK = \{1,2,\ldots, K\} $ and within a group $ k \in \mK $, the set of agents is denoted by $ \mG_k = \{1,\ldots,G_k\} $. We use the notation $ ki $ instead of $(k,i)$ to denote a generic agent.

	We denote by $ x_{ki} \ge 0 $ the rate allocated to an agent $ ki $ and it refers to the data-rate that agent $ ki $ communicates with (with the data content being the same as that of any other agent from group $ k $).
	Thus the overall allocation for the system is a vector $ x = \left( x_{ki} \right)_{ki \in \mN} $ of rates allocated to all the agents.
	
	Agents' have private valuation and thus for any agent $ ki $, the valuation for allocation $ x_{ki} $ is $ v_{ki}(x_{ki}) $ where $ v_{ki}(\cdot):\  \mathbb{R}_{+}\rightarrow \mathbb{R} $.

	
	
	The multicast network consists of links through which agents' data is transmitted from source to destination. The route $ \mL_{ki} $ of agent $ ki $ is the set of links that agent $ ki $ uses for his communication, and $ \mL = \cup_{ki \in \mN} \mL_{ki} $ is the set of all available links. We denote by $ \mN^l $ the set of agents utilizing a link $ l \in \mL $ and it is defined as $ \mN^l = \{ki \in \mN ~\mid~ l \in \mL_{ki} \} $. Also we define $ \mG_k^l \subseteq \mG_k $ 
	to be the set of agents from group $ k $ who use link $ l $ and $ \mK^l $ to be the set of groups that have at least one agent that uses link $ l $ i.e. $ \mK^l = \{ k \in \mK ~\mid~ \mG_k^l \ne \emptyset \} $.
	
	Finally, the cardinality of all the sets defined above are denoted as follows $ N = \vert \mN \vert $, $ K = \vert \mK \vert $, $ G_k = \vert \mG_k \vert $, $ L = \vert \mL \vert $, $ L_{ki} = \vert \mL_{ki}  \vert $, $ G_k^l = \vert \mG_k^l \vert $ and $ N^l = \vert \mN^l \vert $.
	
	In addition to group-wise ordering of agents, we also introduce a combined group and link-wise ordering of agents\footnote{Given previous definitions, this reindexing is somewhat redundant; however, it will be used in the following to make the exposition clearer.}. For group $ k $ and link $ l $, any agent $ ki $, \tb{can be identified by the index} $g_k^l(i)$, where the mapping is defined as
	\begin{equation} \label{eqalternate}
	ki \mapsto g_k^l(i) \quad \text{where} \quad 1 \le g_k^l(i) \le G_k^l,  \quad \forall~l \in \mL_{ki}.
	\end{equation}
	The above mapping is defined so that it preserves the original ordering i.e. $ \forall $ $ i,j \in \mG_k^l $ and $ i > j $ we have $ g_k^l(i) > g_k^l(j) $. Note that with this requirement the mapping is uniquely defined. \tb{For instance, if group $ k $ has four agents $ \{ k1,k2,k3,k4 \} $ and only agents $ k1,k3,k4 $ use a particular link $ l $ then
	\begin{gather}
	k1 \mapsto g_k^l(1) = 1,~~ k3 \mapsto g_k^l(3) =2,~~ k4 \mapsto g_k^l(4) = 3.
	\end{gather}
}
	


	The network administrator is interested in maximizing the social welfare under the link capacity constraints. This centralized problem is
	\begin{alignat}{2} \label{CP0}
	&\max_x \sum_{k \in \mK} \sum_{i \in \mG_k} v_{ki}(x_{ki}) \tag{CP$_0$}\\
	\text{s.t.} &\quad x_{ki} \ge 0 ~~\forall~ ki \in \mN &&  \tag{I$_1$}\\
	\text{and}  &\quad \sum_{k \in \mK^l} {} \max_{j \in \mG_k^l}\{\alpha^l_{kj} x_{kj}\} \le c^l ~~\forall~ l \in \mL. &&  \tag{I$_2$}
	\end{alignat}
	
	Constraints $ I_2 $ are the inequality constraints on allocation and represent the capacity constraint for every link $ l \in \mL$, in the network.
	Here $ \alpha^l_{kj} $ are constants that represent the quality of service requirement of agent $ kj $ combined with the specific architecture on link $ l $. So in order for agent $ kj $ to receive an actual data-rate of $ x_{kj} $ the bandwidth spent on link $ l $ is $ \alpha_{kj}^l x_{kj} $.
	As an example, $\alpha^l_{kj}=\frac{1}{R_{kj}(1-\epsilon^l_{kj})}$ for all links $l\in\mL_{kj}$, where $\epsilon^l_{kj}$ represents the packet error probability for link $l$ for a packet encoded with channel coding rate $R_{kj}$.
	Observe that due to the   multicast/multirate architecture, only the maximum rate of each group $k$ at each link $l$ enters the capacity constraints.

	\subsection{Assumptions} The analysis will be done under the following assumptions.
	
	\begin{enumerate}
		\item[(A1)] For all agents, $ v_{ki}(\cdot) \in \mathcal{V}_{ki} $, where the sets $\mathcal{V}_{ki}$ are arbitrary subsets of $\mathcal{V}_0$, the set of all strictly increasing, strictly concave, twice differentiable functions $\mathbb{R}_{+} \to \mathbb{R}$ with continuous second derivative.
		
		\item[(A2)] $ v_{ki}^{\prime}(0) $ is finite $ \forall $ $ ki \in \mN $. This also implies that $ v_{ki}^{\prime}(x) $ is finite and bounded $ \forall $ $ ki $ and $ \forall $ $ x \in \mathbb{R}_+ $ since $ v_{ki} $'s are concave.
		
		\item[(A3)] Every link has at least two groups that use it i.e. $ K^l \ge 2 $ $ ~\forall $ $ l \in \mL $.
		
		\item[(A4)] The optimal solution of the centralized problem is such that for every link there are at least 2 groups such that each has at least one non-zero component, i.e. if $ S^l(x) \coloneqq \{ k \in \mK^l ~\mid~  \exists ~i \in \mG_k^l ~\text{ s.t. } x_{ki} > 0  \} $ then the assumption says $ \vert S^l(x^{\star}) \vert \ge 2 $ $ ~\forall $ $ l \in \mL $ (where $ x^{\star} $ is the optimal solution of \eqref{CP0}).
	\end{enumerate}
	In addition, the coefficients are all strictly positive, i.e. $ \alpha_{ki}^l > 0 $ $ ~\forall $ $ l \in \mL_{ki} $, $ \forall $ $ ki \in \mN $. 
	Also, for well-posedness of the problem we take $ c^l > 0 $ $ ~\forall $ $ l \in \mL$.
	
	Assumption (A1) is made in order for the centralized problem to have a unique solution at the boundary of the feasible region and for this solution to be precisely characterized by the KKT conditions.
	(A2) is a mild technical assumption that is required in the proof of Lemma~\ref{lemexis}.
	Assumption (A3) is  made in order to avoid situations where there is a link/constraint involving only one multicast group.
	Such a case requires special handling in the design of the mechanism (since in such a case there is no contention at that link), and destructs from the basic idea
	that we want to communicate.
	Finally (A4) is related to (A3) and is made in order to simplify the exposition of the proposed mechanism,
	without having to define corner cases that are of minor importance. \tb{One can state a number of mild sufficient conditions on the original optimization problem~\eqref{CP0} such that (A4) is satisfied. For instance if the optimal solution $ x^\star = \big( x_{ki}^\star \big)_{ki \in \mN} $ is such that $ x_{ki}^\star > 0 $ for every agent $ ki $ then assumption (A4) is satisfied. Furthermore, this can be ensured by considering utility functions such that $ v_{ki}^\prime(0) $ is sufficiently large.}

	\subsection{Necessary and Sufficient Optimality conditions}
	\tb{Under the stated assumptions,~\eqref{CP0} is a convex optimization problem.} Following are the necessary KKT conditions for optimality, which under the stated assumptions are also sufficient (\tb{due to the strict concavity of $ v_{ki}(\cdot) $}).
	For this the centralized problem is first rewritten by restating the capacity constraints in a different form
	\begin{alignat}{3} \label{CP}
	&\max_{x,m} \sum_{k \in \mK} \sum_{i \in \mG_k} v_{ki}(x_{ki}) \tag{CP} \\
	\text{s.t.} &\quad x_{ki} \ge 0 ~~\forall~ ki \in \mN &&  \tag{C$_1$} \\
	\text{and}  &\quad \sum_{k \in \mK^l} m_k^l \le c^l  ~~\forall~ l \in \mL &&\tag{C$_2$}\\
	\text{and}	&\quad \alpha_{ki}^l x_{ki} \le m_k^l ~~\forall~ i \in \mG_k^l,~k \in \mK^l,~ l \in \mL. && \tag{C$_3$}
	\end{alignat}
	The capacity constraints have been rewritten with the introduction of new variables. The virtual variables $ \tb{m_k^l \in \mathbb{R}_+} $ represent the weighted maximum requirement of group $ k $ on link $ l $. It's easy to see that the solution of~\eqref{CP} is the same as the solution of the original~\eqref{CP0} as far as optimal $ x $ is concerned. 
We introduce $ \lambda,\mu,\nu $ as the dual variables corresponding to constraints $C_2$, $C_3$ and $C_1$, respectively. The KKT conditions are stated without explicitly referring to $ \nu_{ki} $'s and just using the fact that $ \nu_{ki}^{\star} \ge 0 $ and $ \nu_{ki}^{\star} x_{ki}^{\star} = 0 $ $ ~\forall~ki \in \mN $.
	Note that with the assumptions above, the KKT conditions below will give rise to a unique $ x^{\star} $
	as the optimizer for \eqref{CP}.
	
	\underline{KKT conditions}: \begin{enumerate}
		\item[a)] Primal Feasibility:
		\begin{subequations}
		\begin{gather}
        x_{ki}^{\star} \ge 0 ~~\forall~ ki \in \mN\\
		\sum_{k \in \mK^l} {m_k^l}^{\star} \le c^l ~~\forall~ l \in \mL;  \\
		\alpha_{ki}^l x_{ki}^{\star} \le {m_k^l}^{\star} ~~ \forall~ i \in \mG_k^l, ~ k \in \mK^l, ~ l \in \mL.
		\end{gather}
		\end{subequations}
		\item[b)] Dual Feasibility:  $ \lambda^{\star}_l \ge 0 $ $ ~\forall $ $ l \in \mL $; $~  \mu_{ki}^l \ge 0 $ $ ~\forall $ $ ki \in \mN^l $, $ l \in \mL $.
		
		\item[c)] Complimentary Slackness:
		\begin{subequations}
		\begin{align}
		\lambda^{\star}_l \left( \sum_{k \in \mK^l} {m_k^l}^{\star} - c^l \right) &= 0 ~~\forall~ l \in \mL, \\
		\label{EQCS2}
		{\mu_{ki}^l}^{\star} \left(\alpha_{ki}^l x_{ki}^{\star} - {m_k^l}^{\star}  \right) &= 0 ~~\forall~ i \in \mG_k^l,~ k \in \mK^l,~ l \in \mL.
		\end{align}
	\end{subequations}
		
		\item[d)] Stationarity:
	\begin{subequations}	
		\begin{gather} \label{EQSS1}
		v_{ki}^{\prime}(x_{ki}^{\star}) = \sum_{l \in \mL_{ki}} {\mu_{ki}^l}^{\star} \alpha^l_{ki} ~ \forall~ ki \in \mN \quad \text{if} \quad x_{ki}^{\star} > 0 \\
		v_{ki}^{\prime}(x_{ki}^{\star}) \le \sum_{l \in \mL_{ki}} {\mu_{ki}^l}^{\star} \alpha^l_{ki} ~\forall~ ki \in \mN \quad \text{if} \quad x_{ki}^{\star} = 0
		\end{gather}
	\end{subequations}
		and
		\begin{equation} \label{EQSS2}
		\lambda_l^{\star} = \sum_{i \in \mG_k^l} {\mu_{ki}^l}^{\star} \quad \forall~~ k \in \mK^l, ~~ l \in \mL.
		\end{equation}
	\end{enumerate}
	Looking at \eqref{EQCS2}, $ {\mu_{ki}^l}^{\star} $ will be non-zero only if $ \alpha_{ki}^l x_{ki}^{\star} = {m_k^l}^{\star} $, so each $ {\mu_{ki}^l}^{\star} $ can be interpreted as the ``price'' paid only by those agents who receive maximum weighted allocation in group $k$ at a given link $l$. Consequently, from \eqref{EQSS2}, $ \lambda_l^{\star} $ is the sum of $ {\mu_{ki}^l}^{\star} $ over those agents in group $k$ that get maximum allocation within the group and it is the same for all groups. $ \lambda_l^{\star} $ can then be interpreted as the common total price per unit of rate that each group is subject to at link $ l $.

	\section{A mechanism with weak budget balance} \label{secmech}

In this section, we consider the case of weak budget balance, i.e., the taxes $t_{ki}$ of all agents at equilibrium satisfy $\sum_{ki\in \mN} t_{ki}\geq 0$. \tb{Note that the taxes used in the model refer to actual money paid by agents and not virtual signals, as may be the case in other works on distributed optimization.}
	
	\subsection{Mechanism } \label{subsecmech}

The designer designs and announces the message space $ \mathcal{S}_{ki} $ for each agent $ ki \in \mN $. The agents pick their message simultaneously and broadcast it. Based on the message profile received $ s = \left(s_{ki}\right)_{ki \in \mN} $, the designer allocates rate $ x = \left(x_{ki}\right)_{ki \in \mN} $ as $ x_{ki} = h_{ki}^x(s) $. Similarly, the designer levies a tax (or subsidy) on each agent $ t = \left(t_{ki}\right)_{ki \in \mN} $ as $ t_{ki} = h_{ki}^t(s) $. The proposed mechanism is described below by defining the sets $ \mathcal{S}_{ki} $ and functions $ \left(h_{ki}^x(\cdot),h_{ki}^t(\cdot)\right) $, for each $ ki \in \mN $.

This gives rise to a one-shot game
\begin{equation}
\mathfrak{G} = \Big( \mN,\big(\mathcal{S}_{ki}\big)_{ki \in \mN}, \big(\widehat{u}_{ki}(\cdot)\big)_{ki \in \mN} \Big),
\end{equation}
played by all the agents in $ \mN $, where action sets are
$ (\mathcal{S}_{ki})_{{ki} \in \mN} $ and utilities are given by
\begin{equation}
\widehat{u}_{ki}(s) = v_{ki}(x_{ki}) - t_{ki} = v_{ki}(h^{x}_{ki}(s)) - h^{t}_{ki}(s) ~~\forall~ki \in \mN.
\end{equation}
The resource allocation problem~\eqref{CP} would be \textrm{\textit{fully implemented in NE}}, if the outcomes (all possible NE) of this game  produce allocation $ x^{\star} $ and all agents in $ \mN $ are better-off participating in the mechanism than opting out (getting $ 0 $ allocation and taxes). 

\noindent
\textbf{Message Space. }
The designer asks each agent to report a message $ s_{ki}= (y_{ki}, Q_{ki}) $ where $ Q_{ki} = \left({^1Q_{ki}}, {^2Q_{ki}} \right) $ with $ {^1Q_{ki}} = \left( ^1q_{ki}^l \right)_{l \in \mL_{ki}} $ and $ {^2Q_{ki}} = \left( ^2q_{ki}^l \right)_{l \in \mL_{ki}} $; also for convenience denote $ q_{ki}^l = \left({^1q_{ki}^l}, {^2q_{ki}^l}\right) $. The message $ s_{ki} $ includes a proxy for agent $ ki $'s demand for the rate, $y_{ki} \in \mathbb{R}_+$, and two ``prices'' $ {^1q_{ki}^l} \in \mathbb{R}_+ $, $ {^2q_{ki}^l} \in \mathbb{R}_+ $ for each link $ l \in \mL_{ki} $ that agent $ki$ is involved in.
For each agent $ki$ and each link $l$, the first price $ ^1q_{ki}^l $ relates to the constraint $ \alpha_{ki}^l x_{ki} \le m_k^l $, while the second price $ {^2q_{ki}^l} $ relates to the constraint $ \alpha_{kj}^l x_{kj} \le m_k^l $, where $ kj $ is the agent \tb{from group $ k $ on link $ l $ which can be identified by the index $ g_k^l(i) + 1 $  (see the group and link-wise indexing notation defined in~\eqref{eqalternate})}.
As it turns out, our methodology makes quoting two prices necessary, as will be shown later. \tb{An intuitive explanation for the need of two prices is provided in the Discussion paragraphs after the tax definition in~\eqref{eq:taxes1} and~\eqref{EQtaxsecond}.}
	From the above definitions, the message space for agent $ ki $ is $ \mathcal{S}_{ki} = \mathbb{R}_{+} \times \mathbb{R}_{+}^{2L_{ki}} $.
For received messages $ s = (s_{ki})_{ki\in\mN} = (\novec{y}, \matrix{Q}) = ((y_{ki})_{ki\in\mN}, (\novec{Q}_{ki})_{ki\in\mN}) $ the contract $ h_{ki}(s) = (h^{x}_{ki}(s), h^{t}_{ki}(s)) $ has an allocation component $h^{x}_{ki}(\cdot)$ and a tax component\footnote{All utilities in this work are assumed to be quasi-linear.}  $h^{t}_{ki}(\cdot)$ and is defined for each $ ki \in \mN $ as follows.

\noindent
\textbf{Allocation function.}
If the received demand vector is $ y = (y_{11}, \ldots, y_{K G_K}) = 0 $ then the allocation is $ x = (x_{11}, \ldots, x_{K G_K}) = 0 $ (also set $ m = 0 $).
Otherwise it is evaluated by radially projecting the demand vector $y$ on the boundary of the feasibility region as depicted in Fig.~\ref{fig:multicast_allocation} for a two-group single-link case.
	\begin{figure}[htbp]
		\centering
		\tdplotsetmaincoords{75}{110}
\begin{tikzpicture}[tdplot_main_coords,scale=4]
\draw[thick,->] (0,0,0) -- (1.25,0,0) node[anchor=north east]{$x_{11}$};
\draw[thick,->] (0,0,0) -- (0,1.25,0) node[anchor=north west]{$x_{12}$};
\draw[thick,->] (0,0,0) -- (0,0,1.25) node[anchor=south]{$x_{21}$};
\draw[dashed,red] (0,0,0) -- (1,1,0) ;
\draw[red,thick] (0,0,1) -- (1,0,0) ;
\draw[red,thick] (0,0,1) -- (0,1,0) ;
\draw[red,thick] (0,0,1) -- (1,1,0) ;
\draw[red,thick] (1,0,0) -- (1,1,0) -- (0,1,0) ;
\draw[blue,thick,dashed] (0,0,0) -- (0.47619,0.61904,0.38095) node[anchor=north west]{\color{black}{$ x $}} ;
\draw[Green,thick,dashed] (0.47619,0.61904,0.38095) -- (1,1.3,0.8) node[anchor=south]{\color{black}{$ y $}} ;
\draw[blue,thick] (0.47619,0.61904,0.38095) circle (0.008cm);
\draw[Green,thick] (1,1.3,0.8) circle (0.008cm);
\draw[dotted,thick]  (0.47619,0.61904,0.38095) -- (0,0.61904,0.38095) ;
\draw[dotted,thick]  (0.47619,0.61904,0.38095) -- (0.47619,0,0.38095) ;
\draw[dotted,thick]  (0.47619,0.61904,0.38095) -- (0.47619,0.61904,0) ;
\end{tikzpicture}
		\caption{Radial projection allocation for a single-link multicast/multirate network with 2 multicast groups $ \mK = \{1,2\} $ with a total of 3 agents i.e. $ \mN = \{11,12,21\} $. Capacity constraint is $ \max \{ x_{11},x_{12} \} + x_{21} \le 1 $. For demand $ y $ allocation $ x $ is just the projection of $ y $ onto the feasible set boundary (first sub-case of~\eqref{EQA1} applies here).}
		\label{fig:multicast_allocation} 	
	\end{figure}
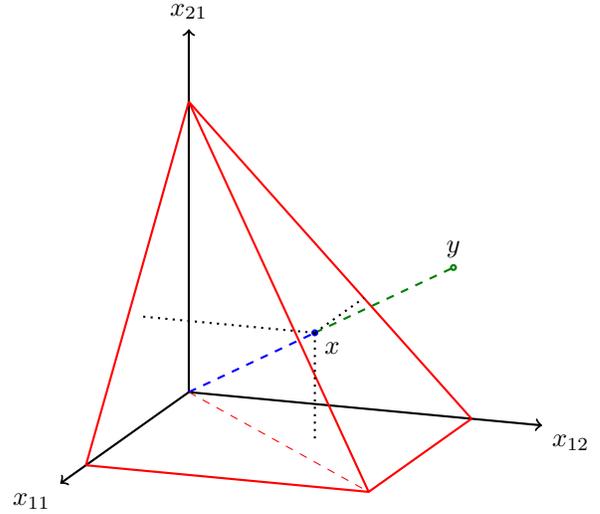
More formally, the allocation function $ h^{x}_{ki}(s) $ creates allocation by first creating proxies \tb{$ n_k^l \in \mathbb{R}_+ $} for the weighted maximum at each link for each group as follows
\begin{subequations}
	\begin{gather}
	n_k^l \triangleq \max_{i \in \mG_k^l} \: \{\alpha_{ki}^l  y_{ki} \} ~~\forall~k \in \mK^l,~ l \in \mL .
	\end{gather}
Then intermediate variables $ m_k^l $'s are created by dilating/shrinking $ n_k^l $'s on to one of the hyperplanes defined by the second set of constraints in $ C_2 $, specifically, that hyperplane for which the corresponding $ m_k^l $'s are the closest to origin (this could also be at the intersection of multiple hyperplanes). This is done through the introduction of a scaling factor $r$ as follows
\begin{gather}
	\label{EQx0}
	r = \min_{l \in \mL} \: r^l
	\end{gather}
	\begin{align} \label{EQA1}
	r^l &= \left\{
	\begin{array}{ll}
	\frac{c^l}{\sum_{k \in \mK^l}  n_k^l }, & \quad \text{if }~ \vert S^l(y) \vert \ge 2 \\[1ex]
	\frac{c^l}{\sum_{k \in \mK^l}  n_k^l } - f^l(n_k^l), & \quad \text{if }~  S^l(y) = \{k\}  \\
	+\infty & \quad \text{if}~ \vert S^l(y) \vert = 0
	\end{array}
	\right.
	\end{align}
	\begin{equation}
	f^l(n_k^l) =
	\frac{c^l}{n_k^l (n_k^l + 1)}.
	\end{equation}
\end{subequations}
Finally allocation $ x_{ki} $ is calculated by dilating/shrinking $ y_{ki} $ by the same factor.
\begin{subequations}
	\begin{align} \label{EQx}
	h^{x}_{ki}(s) &= x_{ki} = r \,y_{ki} ~~\forall~ki \in \mN,
	\\
	m_k^l &\triangleq r\, n_k^l, ~~\forall~k \in \mK^l,~ l \in \mL.
	\end{align}
\end{subequations}
In other words, the contract dilates/shrinks $ n_k^l $ to the boundary of the feasible region defined by the capacity constraints and then allocations within a group are made proportionally. Since all the $ \alpha^l_{kj} $'s are positive, this means that all constraints in $ C_2 $ are satisfied for the allocation automatically, as shown later.
In the above description of the allocation function, the separate definition for $ r^l $ when $ \vert S^l(y) \vert < 2 $ is to ensure (as it will be shown later) that there are no equilibria with allocation $x^{\star}$, where $ \vert S^l(x^{\star}) \vert < 2 $.

\noindent
\textbf{Tax function.}
For the taxes, we first define total prices, $ w_k^l, \bar{w}_{-k}^l $ for any link $ l $ and group $ k \in \mK^l $ as
\begin{align} \label{EQpm}
w_k^l \triangleq \sum_{i \in \mG_k^l} {} {^1q_{ki}^l}, \qquad \bar{w}_{-k}^l \triangleq  \frac{1}{ K^l- 1 } \sum_{k^{\prime} \in \mK^l \backslash \{k\} } w_{k^{\prime}}^l
\end{align}
where $ \bar{w}_{-k}^l $ is well-defined due to assumption (A3).
The tax is defined as the sum of taxes for each constraint in an agent's route
\begin{gather}
\label{EQt}
h^{t}_{ki}(s) = t_{ki} = \sum_{l \in \mL_{ki}} t_{ki}^l~~\forall~ki \in \mN,
\end{gather}
and each component $t_{ki}^l$ is defined as follows.
If $ G_k^l \ge 2 $ then consider agents $ kj $ and $ ke $ \tb{from group $ k $ and on link $ l $ who can be identified by the index $ g_k^l(i) - 1 $ and $ g_k^l(i) + 1 $ (mod $ G_k^l $), respectively}. Then,
\begin{multline}
\label{eq:taxes1}
t_{ki}^l = x_{ki} \alpha^l_{ki}  {^2q_{kj}^l} +  ({^2q_{ki}^l} -  {^1q_{ke}^l})^2 + (w_k^l - \bar{w}_{-k}^l)^2
\\
+ \eta \, {^2q_{kj}^l} ({^1q_{ki}^l} -  {^2q_{kj}^l})(m_k^l - \alpha_{ki}^l x_{ki})
\\
{} + \xi \, \bar{w}_{-k}^l (w_k^l - \bar{w}_{-k}^l) (c^l- \sum_{k^{\prime} \in \mK^l} m_{k^{\prime}}^l),
\end{multline}
where $ \eta,\xi $ are sufficiently small positive constants (whose selection is outlined in the proof of Lemma~\ref{lemexis}). If $ G_k^l = 1 $ then
\begin{multline} \label{EQtaxsecond}
t_{ki}^l = x_{ki} \alpha^l_{ki} \bar{w}_{-k}^l + (w_k^l - \bar{w}_{-k}^l)^2
\\
+ \eta \, \bar{w}_{-k}^l ({^1q_{ki}^l} -   \bar{w}_{-k}^l)(m_k^l - \alpha_{ki}^l x_{ki})
\\
{} + {}  \xi \, \bar{w}_{-k}^l (w_k^l - \bar{w}_{-k}^l) (c^l- \sum_{k^{\prime} \in \mK^l} m_{k^{\prime}}^l).
\end{multline}


\noindent
\textbf{Discussion.}
\tb{
The need for two prices is mainly a technical necessity (an unavoidable corner case whenever an inequality constraint involves only one user) and can be explained as follows. When there is a constraint involving at least two agents, then it suffices to ask each agent to quote a single price for it. The price quoted by agent $kj$ would be used is used in two places in~\eqref{eq:taxes1}: it would determine the price paid by agent $ki$
in his first tax term (this way agent $ki$ does not control both his price and (indirectly) his allocation)
and it will also be used in the fourth term of agent's $kj$ taxes to make sure the complementary slackness condition is satisfied. Similarly for the price quoted by user $ki$.
However, for the problem at hand, constraints (C$_3$) ($ \alpha_{ki}^l x_{ki} \le m_k^l $) involve only one agent, $ki$, per constraint. This changes things because this quoted price cannot be used both in his first and fourth tax terms! This necessitates that users quote two prices as follows.
We require agent $ ki $, to quote a price $ ^1q_{ki}^l $ that will be used in his fourth tax term to  make sure the complementary slackness condition is satisfied.
Clearly we cannot use the same quoted price for his first tax term (this would violate the condition that the same agent cannot control both his price and quantity).
So we ask agent $kj$ to quote a price $ ^2q_{kj}^l $ that is paid by agent $ ki $ (first term in $ t_{ki}^l $).
These two prices, $ ^2q_{kj}^l, {^1q_{ki}^l} $ (the second price quoted by $kj$ and the first price quoted by $ki$) are thus used as a proxy for the same quantity: the optimal Lagrange multiplier $ {\mu_{ki}^l}^\star $ corresponding to  the constraint $ \alpha_{ki}^l x_{ki} \le m_k^l $.
}

\tb{
It should now be clear what the role of the other tax terms is. The second and third terms $ ({^2q_{ki}^l} -  {^1q_{ke}^l})^2, (w_k^l - \bar{w}_{-k}^l)^2 $ are introduced to incentivize agents to quote same prices.
For instance, the term $ ({^2q_{ki}^l} -  {^1q_{ke}^l})^2 $ drives agent $ki$ to quote $ {^2q_{ki}^l} $ such that it matches $ {^1q_{ke}^l} $ - both are a proxy for the optimal Lagrange multiplier $ {\mu_{ke}^l}^\star $, as explained above (for the $kj, ki$ pair of agents).
Similarly, the purpose of the term $ (w_k^l - \bar{w}_{-k}^l)^2 $ is to tax every agent from group $ k $ additionally (in a smooth manner) if the total group price $ w_k^l = \sum_{i \in \mG_k^l} {^1q_{ki}^l} $ doesn't match with the average of group price $ \bar{w}_{-k}^l $ for other groups. This will drive agents to quote $ {^1q_{ki}^l} $'s such that the two match and if the group price for all groups match with the average of others, then indeed all group prices are the same (as required by~\eqref{EQSS2} in KKT).
To satisfy the complimentary slackness conditions at NE we introduce the fourth and fifth terms which charge agent $ ki $ higher taxes if they quote non-zero prices for inactive constraints. This necessarily requires using an agent's own quoted price (i.e. price $ ^1q_{ki}^l $ for agent $ ki $).
}

\tb{
Finally, below we show that all tax terms except the first tax term are zero at NE and $ ^1q_{ki}^l = {^2q_{kj}^l} = {\mu_{ki}^l}^\star $, where $ {\mu_{ki}^l}^\star $ is the optimal Lagrange multiplier (see~\eqref{EQCS2},~\eqref{EQSS1}). Consequently,  the total tax paid by agent $ ki $ at NE is $ t_{ki} = x_{ki}^\star \sum_{l \in \mL_{ki}} \alpha_{ki}^l {\mu_{ki}^l}^\star = x_{ki}^\star v_{ki}^\prime(x_{ki}^\star) $ (see~\eqref{EQSS1}). The price paid by agent $ ki $ at NE is the true marginal valuation $ v_{ki}^\prime(x_{ki}^\star) $ of agent $ ki $ and thus the mechanism ensures fairness of taxes paid by agents in the sense that it they are exactly what expected in a free market.
}

Readers may refer to~\cite{hurwicz1979outcome,chen2002family} for mechanisms with small message spaces which use similar technique for taxation in Nash implementing mechanisms for Lindahl correspondence (public goods).
	
There are two levels of interactions that this mechanism is dealing with, one among groups for allocation of maximum rate on each link and a second within each group. Agents contest for allocation that makes full use of the fact that from within a group only maximum at each link will give rise to a positive price on that link. At any link $ l $ and group $ k $, total price $ w_k^l $ is the sum of prices quoted by all the agents in the group at link $ l $. The quantity $ \bar{w}_{-k}^l $ is calculated by averaging the total prices for link $ l $ over all other groups than $ k $.
Quoting of prices and demand is used as a way of eliciting $ v_{ki}^{\prime}(x_{ki}) $ by comparing it appropriately with prices. In this vein, we do not wish to influence $ \bar{w}_{-k}^l $ with prices quoted by groups whose agents aren't using the link at all, since the price then essentially doesn't contain any information.

	\subsection{Results}

The basic result of this section is summarized in the following theorem.
	\begin{theorem}[Full Implementation] \label{thmain}
		For game $ \mathfrak{G} $, there is a unique allocation, $ x $, corresponding to all NE. Moreover, $ x = x^{\star} $, the maximizer of \eqref{CP}. In addition, individual rationality is satisfied for all agents and so is weak budget balance.
	\end{theorem}
	
	The theorem is proved by a sequence of results, in which all candidate NE of $ \mathfrak{G} $ are characterized by necessary conditions until only one family of NE candidates is left. Subsequently,  the existence of NE in pure strategies for $ \mathfrak{G} $ is shown, and that all NE result in allocation $ x = x^{\star} $. 
	Finally, individual rationality and WBB will be checked.
	
	
	\begin{lemma}[Primal Feasibility] \label{lemPF}
		For any action profile $ s = (\novec{y}, \matrix{Q}) $ of game $ \mathfrak{G} $, constraints $ C_1 $, $ C_2 $ and $C_3$ of \eqref{CP} are satisfied at the corresponding allocation.
	\end{lemma}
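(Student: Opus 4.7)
The plan is to verify each constraint directly from the definitions in Section~\ref{subsecmech}, treating the trivial and nontrivial cases of the reported demand vector separately. If $y=0$, then the contract sets $x=0$ and $m_k^l=0$ identically, so all three constraints hold trivially. For the remainder I assume $y\neq 0$, which guarantees that at least one link has $|S^l(y)|\ge 1$, hence $r=\min_l r^l$ is finite.

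I would start with $C_1$. Since $y_{ki}\ge 0$ and $\alpha_{ki}^l>0$, each $n_k^l\ge 0$. A short case analysis on $|S^l(y)|$ shows that every finite $r^l$ is strictly positive: when $|S^l(y)|\ge 2$, $r^l=c^l/\sum_{k}n_k^l>0$ because $c^l>0$; when $|S^l(y)|=\{k\}$, a direct simplification gives
\begin{equation*}
r^l \;=\; \frac{c^l}{n_k^l}-\frac{c^l}{n_k^l(n_k^l+1)} \;=\; \frac{c^l}{n_k^l+1}\;>\;0,
\end{equation*}
and the $|S^l(y)|=0$ case produces $+\infty$ and so does not bind. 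Thus $r>0$, and $x_{ki}=r\,y_{ki}\ge 0$ for all $ki$, establishing $C_1$.

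Next, $C_3$ is immediate from the definition of $n_k^l$: for any $i\in\mG_k^l$,
\begin{equation*}
\alpha_{ki}^l x_{ki} \;=\; r\,\alpha_{ki}^l y_{ki} \;\le\; r\max_{j\in\mG_k^l}\{\alpha_{kj}^l y_{kj}\} \;=\; r\,n_k^l \;=\; m_k^l.
\end{equation*}
The only point that requires care is the sign of $r$, which we have already pinned down.

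Finally, for $C_2$, fix any link $l$ and compute $\sum_{k\in\mK^l}m_k^l=r\sum_{k\in\mK^l}n_k^l$. Since $r\le r^l$ by construction, I again split on $|S^l(y)|$: if $|S^l(y)|\ge 2$ then $r\sum_k n_k^l\le r^l\sum_k n_k^l=c^l$; if $S^l(y)=\{k\}$ then $r\sum_k n_k^l=r\,n_k^l\le c^l n_k^l/(n_k^l+1)<c^l$; and if $|S^l(y)|=0$ the sum is zero. In all cases $\sum_{k\in\mK^l}m_k^l\le c^l$, proving $C_2$. There is no real obstacle here; the only subtlety is making sure that the subtractive correction $f^l$ in the single-group case of $r^l$ does not turn $r^l$ negative, which the algebraic identity above rules out once and for all.
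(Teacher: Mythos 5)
Your proof is correct and follows essentially the same route as the paper: handle $y=0$ trivially, then for $y\neq 0$ use $r=\min_l r^l\le r^l$ together with the definition of $n_k^l$ to verify $C_2$ and $C_3$ link by link. The only difference is that you explicitly verify $r>0$ in the single-group case via the identity $r^l=c^l/(n_k^l+1)$ (a detail the paper's proof of this lemma leaves implicit, though it computes the same identity later in the stationarity argument), which is a harmless and slightly more careful addition.
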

	\begin{proof}
		Please see Appendix~\ref{app1}.
	\end{proof}

	Feasibility of allocation for action profiles is a direct consequence of the radial projection allocation function. Next, it is shown that all groups, using a link, quote the same total price $ w_k^l $ for that link at any equilibrium. This is brought about by the 3$^{\text{rd}} $ tax term $ \sum_l (w_k^l - \bar{w}_{-k}^l)^2 $. This is a way of threatening agents with higher taxes just for quoting a different price than average,  at each link.

	\begin{lemma} \label{lemcmp}
		At any NE $ s = (\novec{y}, \matrix{Q}) $ of $ \mathfrak{G} $, for any link $ l \in \mL $ we have
		\begin{equation}
		w_k^l = w^l \quad \forall~ k \in \mK^l.
		\end{equation}
		Also, for any group $ k $ and link $ l $ such that $ G_k^l \ge 2 $ if we take any agents $ i, e \in \mG_k^l $ \tb{where agent $ ke $ can be identified by the index $ g_k^l(i) + 1 $} then at equilibrium we will have $ {^2q_{ki}^l} = {^1q_{ke}^l} $ (which will be denoted as $ p_{ke}^l $).
	\end{lemma}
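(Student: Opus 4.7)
The plan is to exploit the quadratic penalty structure of the tax function and characterize Nash equilibria via first-order (and boundary) necessary conditions on the messages $({}^1q_{ki}^l, {}^2q_{ki}^l)$.

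I would begin with the second claim, which is the easier one. Fix any link $l$ and agent $ki$ with $G_k^l \ge 2$, and inspect $t_{ki}$ as a function of the single coordinate ${}^2q_{ki}^l$. The allocation $x$ is independent of all ${}^2q$'s, and inside $t_{ki}$ the only summand that involves ${}^2q_{ki}^l$ is $({}^2q_{ki}^l - {}^1q_{ke}^l)^2$ (it appears nowhere else in $t_{ki}^{l'}$ for $l' \ne l$ either, since ${}^2q_{ki}^l$ is tied only to link $l$). Hence $\widehat{u}_{ki}$ is maximized over ${}^2q_{ki}^l \in \mathbb{R}_+$ by setting ${}^2q_{ki}^l = {}^1q_{ke}^l \ge 0$, which is feasible. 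So any NE must have ${}^2q_{ki}^l = {}^1q_{ke}^l =: p_{ke}^l$.

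For the first claim I would use Step~1 to simplify: at NE, ${}^2q_{kj}^l = {}^1q_{ki}^l$ whenever $kj$ is the cyclic predecessor of $ki$ in $\mathcal{G}_k^l$. Viewed as a function of ${}^1q_{ki_m}^l$ alone, $t_{ki_m}$ is strictly convex with leading coefficient $1$ (coming from $(w_k^l-\bar{w}_{-k}^l)^2$). Writing the first-order condition and substituting ${}^2q_{kj_m}^l = {}^1q_{ki_m}^l$ yields, for an interior best response,
\[
2(w_k^l - \bar{w}_{-k}^l) + \eta\,{}^1q_{ki_m}^l\,(m_k^l - \alpha_{ki_m}^l x_{ki_m}) + \xi\,\bar{w}_{-k}^l\!\left(c^l - \sum_{k' \in \mathcal{K}^l} m_{k'}^l\right) = 0,
\]
and a ``$\ge 0$'' variant at the boundary ${}^1q_{ki_m}^l = 0$. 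The case $G_k^l = 1$ is covered by the same relation, since then $\alpha_{ki}^l x_{ki} = m_k^l$ makes the $\eta$ term vanish.

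The crucial observation is that in every group $k \in \mathcal{K}^l$ there is at least one ``at-max'' agent with $\alpha_{ki_m}^l x_{ki_m} = m_k^l$ (the one(s) attaining $n_k^l$), so for such an agent the $\eta$ term drops out. I would then run a short case analysis, leveraging the assumption that $\xi$ is small (say $\xi\,c^l < 2$): if some agent in group $k$ plays ${}^1q = 0$, the boundary inequality forces $\bar{w}_{-k}^l\bigl(\xi(c^l-\sum m)-2\bigr)\ge 0$, which with $\xi c^l < 2$ requires $\bar{w}_{-k}^l = 0$, hence $w_{k'}^l = 0$ for all $k' \ne k$. Consequently either all $w_k^l = 0$ (the claim is trivial with $w^l = 0$), or the at-max agent in every group has ${}^1q > 0$ and hence satisfies the equality FOC
\[
2(w_k^l - \bar{w}_{-k}^l) + \xi\,\bar{w}_{-k}^l\!\left(c^l - \sum_{k'} m_{k'}^l\right) = 0. \tag{$\ast$}
\]

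In the nontrivial case I would sum $(\ast)$ over $k \in \mathcal{K}^l$. Because $\sum_k w_k^l = \sum_k \bar{w}_{-k}^l$ identically (each $w_{k'}^l$ is counted $K^l-1$ times in the right-hand sum, canceling the $1/(K^l-1)$ factor), the first term sums to zero, leaving $\xi(c^l - \sum m)\sum_k w_k^l = 0$. Since some $w_k^l > 0$, this yields $c^l = \sum_{k'} m_{k'}^l$, and substituting back into $(\ast)$ gives $w_k^l = \bar{w}_{-k}^l$ for every $k \in \mathcal{K}^l$. A direct algebraic argument (adding $w_k^l$ to both sides and comparing with $\sum_{k'} w_{k'}^l = K^l w_k^l$) then forces $w_k^l$ to equal the common average, so all $w_k^l$ coincide with a single value $w^l$. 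The main obstacle, in my view, is this bookkeeping at the boundary ${}^1q = 0$; it is precisely the smallness of $\xi$ that prevents mixed equilibria in which some groups quote positive total price while others quote zero.
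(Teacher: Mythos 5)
Your proof of the second claim is correct and is essentially the paper's argument: the only occurrence of $ {^2q_{ki}^l} $ in $ \widehat{u}_{ki} $ is the penalty $ -({^2q_{ki}^l}-{^1q_{ke}^l})^2 $, so matching it to $ {^1q_{ke}^l} $ is the unique best response in that coordinate.

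For the first claim there is a genuine gap in your boundary case analysis. For the at-max agent with $ {^1q_{ki_m}^l}=0 $ the necessary condition is
\[
2(w_k^l-\bar{w}_{-k}^l)+\xi\,\bar{w}_{-k}^l\Bigl(c^l-\sum_{k'\in\mK^l}m_{k'}^l\Bigr)\ge 0,
\qquad\text{i.e.}\qquad
2w_k^l+\bar{w}_{-k}^l\Bigl(\xi\bigl(c^l-\textstyle\sum_{k'}m_{k'}^l\bigr)-2\Bigr)\ge 0 .
\]
You dropped the $ 2w_k^l $ term: when $ G_k^l\ge 2 $ the group total $ w_k^l=\sum_{i\in\mG_k^l}{^1q_{ki}^l} $ can be strictly positive even though the at-max agent quotes zero, so the inequality does not force $ \bar{w}_{-k}^l=0 $. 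Consequently your dichotomy (``either all $ w_k^l=0 $ or the at-max agent of every group is interior'') does not exhaust the cases --- a profile in which group $k$'s at-max agent quotes $0$ while a non-max agent of the same group quotes a positive price falls in neither branch --- and the subsequent summation of $(\ast)$ over $ k\in\mK^l $ rests on an unestablished premise.

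The repair is to drop the at-max agent and the smallness of $ \xi $ entirely and apply your own interior first-order condition to an \emph{arbitrary} agent with $ {^1q_{ki}^l}>0 $: it reads $ 2(w_k^l-\bar{w}_{-k}^l)=-\eta\,{^2q_{kj}^l}(m_k^l-\alpha_{ki}^lx_{ki})-\xi\,\bar{w}_{-k}^l(c^l-\sum_{k'}m_{k'}^l)\le 0 $, since both subtracted terms are nonnegative by primal feasibility (Lemma~\ref{lemPF}) and nonnegativity of prices. Hence $ w_k^l\le\bar{w}_{-k}^l $ for every group with $ w_k^l>0 $, trivially also when $ w_k^l=0 $, and this forces all the $ w_k^l $ to coincide. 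This is exactly the paper's proof, phrased there as a profitable downward $ \epsilon $-deviation for a group with $ w_k^l>\bar{w}_{-k}^l $ (which must contain an agent with $ {^1q_{ki}^l}>0 $), giving $ \Delta\widehat{u}_{ki}=\epsilon(-\epsilon+a) $ with $ a>0 $. Note that no bound on $ \xi $ is needed here (smallness of $ \eta,\xi $ enters only in the existence proof, Lemma~\ref{lemexis}), and no detour through $ c^l=\sum_{k'}m_{k'}^l $ is required.
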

\begin{proof}
	Please see Appendix~\ref{app2}.
\end{proof}
	
	
	With Lemma \ref{lemcmp}, equilibria can now be referred to in terms of the common total price vector $P$ rather than the two different total price vectors $^1Q$, and $^2Q$. In particular, any NE $ s = (y, Q) $ with $ Q = \left( {^1Q} , {^2Q} \right) $ can be characterized as $ s = (y, P) $ with $ P = (P_{ki})_{ki \in \mN} $ and $ P_{ki} = \left( p_{ki}^l \right)_{l \in \mL_{ki}} $.

	Later it will become clear how $ p_{ki}^l $ and $ w^l $ take the place of dual variables $ \mu_{ki}^l $ and $ \lambda_l $, respectively, when we compare equilibrium conditions with KKT conditions, hence we identify the following condition as dual feasibility.

	\begin{lemma}[Dual Feasibility] \label{lemDF}
		$ p_{ki}^l \ge 0$, $ w^l \ge 0 $ $ ~~\forall $ $ i \in \mG_k^l $, $ \forall $ $ k \in \mK^l $ and $ ~\forall $ $ l \in \mL $.
	\end{lemma}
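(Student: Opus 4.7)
The plan is that this lemma is essentially immediate from the construction of the action space together with Lemma \ref{lemcmp}. Recall that in the mechanism definition in Subsection \ref{subsecmech}, the action space is $\mathcal{S}_{ki} = \mathbb{R}_+ \times \mathbb{R}_+^{2L_{ki}}$, so every quoted price ${}^1q_{ki}^l$ and ${}^2q_{ki}^l$ is by assumption a non-negative real number. Thus there is nothing to ``derive'' about non-negativity; one only needs to chase how the notation $p_{ki}^l$ and $w^l$ was introduced in order to see that these are just relabelings of (sums of) non-negative quotes.

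First I would treat $w^l$. By Lemma \ref{lemcmp}, at any NE we have $w_k^l = w^l$ for every $k \in \mK^l$, and by definition $w_k^l = \sum_{i \in \mG_k^l} {}^1q_{ki}^l$. Since each summand lives in $\mathbb{R}_+$, the sum is non-negative, hence $w^l \ge 0$.

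Next I would treat $p_{ki}^l$. By the notation introduced in the proof of Lemma \ref{lemcmp}, there are two cases. If $G_k^l \ge 2$, then for $i \in \mG_k^l$ with predecessor $j$ (in the cyclic ordering $g_k^l$) we have $p_{ki}^l = {}^2q_{kj}^l = {}^1q_{ki}^l$ at equilibrium, and both sides belong to $\mathbb{R}_+$ by the action-space restriction. If $\mG_k^l = \{i\}$ is a singleton, then $p_{ki}^l = {}^1q_{ki}^l = w_k^l$, which is again in $\mathbb{R}_+$. In either case $p_{ki}^l \ge 0$.

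There is really no obstacle here; the only thing to be careful about is to use the notational convention established in Lemma \ref{lemcmp} (both the $G_k^l\ge 2$ and $G_k^l=1$ cases) so that the symbol $p_{ki}^l$ is unambiguously identified with a particular non-negative quoted price, and to invoke Lemma \ref{lemcmp} to justify writing the common price $w^l$ in the first place.
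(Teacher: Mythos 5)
Your proposal is correct and follows exactly the paper's own reasoning: the paper's proof is the one-line observation that the result holds ``by design, since agents are only allowed to quote non-negative prices and $w^l$ is the sum of such prices.'' Your additional bookkeeping tracing $p_{ki}^l$ and $w^l$ back through the notational conventions of Lemma~\ref{lemcmp} is accurate but not a different argument.
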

	\begin{proof}
		This is by design, since agents are only allowed to quote non-negative prices and that $ w^l $ is the sum of such prices.
	\end{proof}

	Following is the property that solidifies the notion of prices as dual variables, since the claim here is that inactive constraints do not contribute to payment at equilibrium. This notion is very similar to the centralized problem, where if we know certain constraints to be inactive at the optimum then the same problem without these constraints would be equivalent to the original. The 4\textsuperscript{th} and 5\textsuperscript{th} terms in the tax function facilitate this by charging extra taxes for inactive constraints where the agent is quoting higher prices than the average of remaining ones, thereby driving prices down.

	\begin{lemma}[Complimentary Slackness] \label{lemCompSlac}
		At any NE $ s = (\novec{y}, P)$ of game $ \mathfrak{G} $ with corresponding allocation $ x $, for any agent $ i \in \mG_k^l $, group $ k \in \mK^l $ and link $ l \in \mL $ we have 
		\begin{equation}
		w^l \left ( \sum_{k \in \mK^l} m_k^l - c^l \right) = 0,  \qquad  p_{ki}^l \left( \alpha_{ki}^l x_{ki} - m_k^l \right) = 0.
		\end{equation}
	\end{lemma}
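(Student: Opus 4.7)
The plan is to piggyback on the deviation argument already established in Lemma~\ref{lemcmp}, specialized to the case where the price disagreement terms vanish. Observe that Lemma~\ref{lemcmp} gives us the equilibrium identities $w_k^l=\bar w_{-k}^l=w^l$ and ${}^2q_{kj}^l={}^1q_{ki}^l=p_{ki}^l$ for the relevant indices, so the two quadratic ``penalty'' tax terms in $t_{ki}^l$ produce zero first-order contribution under small deviations in ${}^1q_{ki}^l$. What survives are the $\eta$ and $\xi$ terms, which is exactly what we will exploit.

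For the first identity, I assume $w^l>0$ (otherwise it is trivial). Since $w_{k'}^l=w^l>0$ for every $k'\in\mK^l$, in each group there is at least one agent $ki$ with ${}^1q_{ki}^l>0$. Fix such an agent in a group with $G_k^l\ge 2$ and consider the deviation ${}^1q_{ki}^l \mapsto {}^1q_{ki}^l-\epsilon$ for small $\epsilon>0$. Repeating the computation that led to~\eqref{EQdelta}, but now with $w_k^l=\bar w_{-k}^l$ and $p_{ki}^l={}^1q_{ki}^l$ already equal at equilibrium, the cross terms cancel and I obtain
\begin{equation}
\Delta\widehat u_{ki}=\epsilon\Bigl(-\epsilon+\eta\,p_{ki}^l(m_k^l-\alpha_{ki}^l x_{ki})+\xi\,w^l\bigl(c^l-\textstyle\sum_{k'\in\mK^l}m_{k'}^l\bigr)\Bigr).
\end{equation}
The NE condition rules out this being positive for all sufficiently small $\epsilon$, hence the bracket evaluated at $\epsilon=0^+$ must be nonpositive. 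By Lemma~\ref{lemPF} both summands in the bracket are already nonnegative, so each individually vanishes. In particular $\xi w^l(c^l-\sum m_{k'}^l)=0$, and since $\xi>0$ this gives $w^l(\sum m_{k'}^l-c^l)=0$.

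For the second identity, the same calculation forces $p_{ki}^l(m_k^l-\alpha_{ki}^l x_{ki})=0$ for every agent $ki$ with ${}^1q_{ki}^l=p_{ki}^l>0$, while for $p_{ki}^l=0$ the claim holds trivially. The case $G_k^l=1$ is treated in parallel using the alternative tax expression: the downward deviation in ${}^1q_{ki}^l$ produces the analogous increment in which $\bar w_{-k}^l$ replaces $p_{ki}^l$ inside the $\eta$-term, but at the equilibrium $\bar w_{-k}^l = w_k^l={}^1q_{ki}^l = p_{ki}^l$ so the conclusion is identical.

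The only delicate point, and the main (mild) obstacle, is justifying that a profitable downward deviation in ${}^1q_{ki}^l$ is actually available whenever we need it. This is handled by the observation above that $w^l>0$ together with Lemma~\ref{lemcmp} forces \emph{every} group using link $l$ to contain at least one agent with a strictly positive personal price, so the deviation $\epsilon<{}^1q_{ki}^l$ is admissible. Once this is in place the rest is a routine sign analysis of the bracketed coefficient, with nonnegativity supplied by Lemma~\ref{lemPF} and Lemma~\ref{lemDF}.
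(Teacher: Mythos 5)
Your proof is correct and follows essentially the same route as the paper: a small downward deviation in $^1q_{ki}^l$ by an agent with strictly positive personal price, with the increment computed as in \eqref{EQdelta} after invoking Lemma~\ref{lemcmp} to kill the quadratic penalty terms. The only (cosmetic) difference is that you extract both complementary slackness identities from a single deviation by noting that the sum of the two nonnegative $\eta$- and $\xi$-terms must be nonpositive at a NE, whereas the paper runs two separate contradiction arguments; the content is identical.
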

	\begin{proof}
		Please see Appendix~\ref{app3}.
	\end{proof}

	\begin{lemma}[Stationarity] \label{lemStat}
		At any NE $ {s} = ({\novec{y}}, \novec{P}) $ of game $ \mathfrak{G} $, and corresponding allocation $ {x} $, we have
		\begin{subequations}
		\begin{gather} \label{EQKKTst1}
		v_{ki}^{\prime}({x}_{ki}) = \sum_{l \in \mL_{ki}} p_{ki}^l \alpha^l_{ki} ~~ \forall~ ki \in \mN \quad \text{if} \quad {x}_{ki} > 0
		\\ \label{EQKKTst2}
		v_{ki}^{\prime}({x}_{ki}) \le \sum_{l \in \mL_{ki}} p_{ki}^l \alpha^l_{ki} ~~ \forall~ ki \in \mN \quad \text{if} \quad {x}_{ki} = 0
		\end{gather}
		\end{subequations}
		and
		\begin{equation} \label{EQS2}
		w^l = \sum_{i \in \mG_k^l} p_{ki}^l ~~\forall~ k \in \mK^l ,~ l \in \mL.
		\end{equation}
	\end{lemma}
	\begin{proof}
		Please see Appendix~\ref{app4}.
	\end{proof}
	
	Collecting the results of the above lemmas, we can conclude that every NE profile satisfies the KKT conditions of the \eqref{CP}.
	This means there are now necessary conditions on the NE up to the point of having unique allocation (since KKT for~\eqref{CP} is satisfied by a unique optimal $ x $). In the next Lemma we verify the existence of the equilibria that we have claimed.

\begin{lemma}[Existence] \label{lemexis}
For the game $ \mathfrak{G} $, there exists an equilibrium.
\end{lemma}
	\begin{proof}
		Please see Appendix~\ref{app5}.
	\end{proof}

	Several comments are in order regarding the selection of the radial projection allocation function and in particular~\eqref{EQA1}.
	If one uses ``pure'' radial projection i.e. same expression for $ r^l $ for $ \vert S^l(y) \vert \ge 2 $ and $ \le 1 $, then irrespective of the optimal solution of \eqref{CP}, for game $ \mathfrak{G} $ the ``stationarity'' property will not be satisfied at equilibria with $ \vert S^l(y) \vert \le 1$. Thus the mechanism will result in additional extraneous equilibria.
	This is the reason why we tweak the expression for $ r^l $ when $ \vert S^l(y) \vert \le 1 $, so as to eliminate these extraneous equilibria - irrespective of the solution of \eqref{CP}.
	With this tweak in the expression for $ r^l $, all KKT conditions become necessary for all
	equilibria regardless of the value of  $\vert S^l(y) \vert $.
	This however creates a problem in the proof of existence of equilibria.
	In particular, if $ x^{\star} $ was such that it had links where $ \vert S^l(x^{\star}) \vert = 1 $ then with modified radial projection allocation this would require $ y $ at NE such that $ \vert S^l(y) \vert = 1 $.
	In this case the $ r^l $ used would be lower than what the radial projection requires (see second sub-case in \eqref{EQA1}) and we actually would have the problem of possibly not having any $ y $ that creates $ x^{\star} $ as allocation.
	Hence we have used (A4) to eliminate this case.

	\begin{lemma}[Individual Rationality and WBB] \label{lem_ir}
		At any NE $ s = (\novec{y}, \novec{P}) $ of $ \mathfrak{G} $, with corresponding allocation $ x $ and taxes $ t $, we have
		\begin{align} \label{EQIRB}
		u_{ki}(x,t) &\ge u_{ki}(0,0) ~~\forall~ ki \in \mN \\
		\text{and} \qquad \sum_{ki \in \mN} t_{ki} &\ge 0.  
		\end{align}
	\end{lemma}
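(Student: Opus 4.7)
The plan is to prove individual rationality agent by agent using a single carefully chosen deviation, and to prove weak budget balance by directly evaluating the sum of taxes at any NE using the KKT-style characterisation already obtained in Lemmas~\ref{lemcmp}--\ref{lemStat}.

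For individual rationality I would fix an NE $s = (y, P)$ and, for each agent $ki$, consider the unilateral deviation $s_{ki}^{\prime} = (y_{ki}^{\prime} = 0,\ Q_{ki}^{\prime} = Q_{ki})$, i.e., zero demand but the agent's own prices kept at their NE values. Since the allocation is proportional to demand, $x_{ki}^{\prime} = 0$, so the linear term $x_{ki}^{\prime} \alpha^l_{ki} {^2q_{kj}^l}$ of each $t_{ki}^{l}$ vanishes. The two quadratic penalty terms $({^2q_{ki}^l} - {^1q_{ke}^l})^2$ and $(w_k^l - \bar{w}_{-k}^l)^2$ are already zero at NE by Lemma~\ref{lemcmp} and remain zero since no prices have been changed. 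In the $\eta$-term the factor $^1q_{ki}^l - {^2q_{kj}^l} = p_{ki}^l - p_{ki}^l$ is zero by Lemma~\ref{lemcmp}, and in the $\xi$-term the factor $w_k^l - \bar{w}_{-k}^l$ is zero for the same reason; these remain zero even though $m_k^l$ may shift when $y_{ki}$ drops, because the multiplying prices are fixed. The same cancellations apply in the $G_k^l = 1$ subcase, using the NE identity $^1q_{ki}^l = w_k^l = \bar{w}_{-k}^l$. Hence $t_{ki}^{\prime} = 0$ and $u_{ki}(x^{\prime},t^{\prime}) = v_{ki}(0) = u_{ki}(0,0)$, so the NE property delivers $u_{ki}(x,t) \ge u_{ki}(0,0)$.

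For weak budget balance I would substitute the NE conditions into $\sum_{ki \in \mN} t_{ki} = \sum_{l \in \mL} \sum_{k \in \mK^l} \sum_{i \in \mG_k^l} t_{ki}^{l}$. By Lemma~\ref{lemcmp} all quadratic and all $\eta,\xi$ terms vanish at NE, and the surviving linear term is $x_{ki} \alpha^l_{ki} p_{ki}^l$ (using ${^2q_{kj}^l} = p_{ki}^l$ when $G_k^l \ge 2$ and ${^1q_{ki}^l} = w^l = p_{ki}^l$ when $G_k^l = 1$). Complementary slackness $p_{ki}^l (\alpha_{ki}^l x_{ki} - m_k^l) = 0$ from Lemma~\ref{lemCompSlac} rewrites this as $m_k^l p_{ki}^l$. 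Summing over $i \in \mG_k^l$ and using the stationarity relation $\sum_{i \in \mG_k^l} p_{ki}^l = w^l$ from Lemma~\ref{lemStat} yields $m_k^l w^l$; summing over $k \in \mK^l$ and invoking $w^l (c^l - \sum_{k \in \mK^l} m_k^l) = 0$ gives $w^l c^l$; a final summation over $l$ produces $\sum_{ki} t_{ki} = \sum_{l \in \mL} w^l c^l \ge 0$ by dual feasibility (Lemma~\ref{lemDF}) and $c^l > 0$.

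The only technical obstacle is the bookkeeping of the two subcases $G_k^l \ge 2$ and $G_k^l = 1$ and the verification that Lemma~\ref{lemcmp} annihilates exactly the right factors in the $\eta$- and $\xi$-terms; no new deviation argument or estimate beyond the single zero-demand deviation for IR is required, and WBB then follows telescopically from the previously established equilibrium conditions.
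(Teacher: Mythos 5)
Your proof is correct, but the individual-rationality half takes a genuinely different route from the paper. The paper proves \eqref{EQIRB} analytically: it defines $f(z) = v_{ki}(z) - z\sum_{l\in\mL_{ki}}\alpha^l_{ki}p^l_{ki}$, observes that $f(0)=u_{ki}(0,0)$ and $f(x_{ki})=u_{ki}(x,t)$ (all other tax terms vanish by Lemma~\ref{lemcmp}), and then uses the stationarity condition $f^{\prime}(x_{ki})=0$ from Lemma~\ref{lemStat} together with strict concavity of $v_{ki}$ to conclude $f$ is increasing on $[0,x_{ki}]$, hence $f(x_{ki})\ge f(0)$. You instead exhibit the explicit unilateral deviation to zero demand with own prices held fixed, verify that every term of $t_{ki}^l$ then carries either the factor $x_{ki}^{\prime}=0$ or a price-difference factor that Lemma~\ref{lemcmp} has already annihilated (and which is unaffected by the shift in $m_k^l$), and invoke the definition of NE directly. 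Your argument is more elementary and slightly more robust: it needs only Lemma~\ref{lemcmp} and the NE inequality, not stationarity, differentiability, or concavity, and it is the standard ``opt-out deviation'' trick for individual rationality in Nash implementation; the paper's version is a one-line consequence of the KKT-style characterisation it has already built. One should only note, as you implicitly do, that $x_{ki}^{\prime}=r^{\prime}y_{ki}^{\prime}=0$ is well defined whether or not the deviation zeroes the entire demand vector. For weak budget balance the two arguments coincide: the paper simply observes that the sole surviving term $x_{ki}\sum_l\alpha^l_{ki}p^l_{ki}$ is non-negative; your further telescoping via complementary slackness and \eqref{EQS2} to the identity $\sum_{ki}t_{ki}=\sum_{l\in\mL}w^l c^l$ is correct and mildly informative (it identifies the seller's revenue as capacity times link price) but is not needed for the claim.
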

	\begin{proof}
		Please see Appendix~\ref{app6}.
	\end{proof}

	With all the Lemmas characterizing NE in the same way as KKT conditions (and individual rationality), we can compare them to prove Theorem \ref{thmain}.
	\begin{proof}[Proof of Theorem \ref{thmain}]
		We know that the four KKT conditions produce a unique solution $ x^{\star} $ (and corresponding $ \lambda^{\star} $).
For the game $ \mathfrak{G} $, if NE exist, then from Lemmas \ref{lemPF}--\ref{lemStat} we can see that at any NE, allocation $ x $ and prices $ \novec{p} $ satisfy the same conditions as the four KKT conditions and hence they give a unique $ x =x^{\star} $, as long as (A5) is satisfied. We conclude that the allocation is $ x^{\star} $ across all NE.
Existence of the claimed NE is established in Lemma~\ref{lemexis}.
		This combined with individual rationality Lemma~\ref{lem_ir}, proves Theorem \ref{thmain}.
	\end{proof}

	\section{A mechanism with Strong Budget Balance} \label{secmechown}
	
	We now present a modification of the previous mechanism that in addition to previous results also ensures \textit{strong budget balance} (SBB) at NE i.e. $ \sum_{ki \in \mN} t_{ki} = 0 $.

	For creating a mechanism in this formulation, the main difference with the previous section, is that the designer has to find a way of redistributing the total tax paid by all the agents. In the last section it was shown that the total payment made at the equilibrium is
	\begin{equation} \label{EQtotpay}
	B = \sum_{ki \in \mN} \left( x_{ki} \sum_{l \in \mL_{ki}} \alpha^l_{ki} p^l_{ki} \right) = r \sum_{ki \in \mN} \left( y_{ki} \sum_{l \in \mL_{ki}} \alpha^l_{ki} p^l_{ki} \right),
	\end{equation}
since all other tax terms are zero at equilibrium. The method here, following in the spirit of \cite{hurwicz1979outcome}, is to redistribute taxes by modifying the tax function for each agent only using messages from other agents. This has the advantage of keeping equilibrium calculations in line with the previous mechanism, 
	since deviations by an agent wouldn't affect his utility through this additional term. In view of this, an alternate expression for $B$ is
	\begin{equation} \label{EQALTREP}
	B = r \sum_{ki \in \mN} \left( \sum_{l \in \mL_{ki}} \frac{1}{N^l-1} \sum_{{k^{\prime} j} \in \mN^l\backslash\{ki\}} \alpha^l_{k^{\prime} j} p_{k^{\prime}j}^l y_{k^{\prime} j} \right),
	\end{equation}
	where each term of the outermost summation depends only on demands of agents other than the $ {ki}\textsuperscript{th} $ one.
	This means that each term in the parenthesis (scaled by the factor $ r $) can now be used as the desired additional tax for user $ ki $.
	Observe however that in our mechanism, each agent's demand affects the factor $ r $ as well.
	So, if all agents can agree on value of $ r $ then that signal can be used to create the term that facilitates budget balance.
	
	In lieu of this, the mechanism here works by asking for an additional signal $ \rho_{ki} $ from every agent and imposing an additional tax of $ (\rho_{ki} - r)^2 $, thereby essentially ensuring that all agents agree on the value of $ r $ (via $ \rho_{ki} $'s) at equilibrium. Finally, $ \bar{\rho}_{-ki} $ (see \eqref{EQRHO}) is used as a proxy for $ r $ in \eqref{EQALTREP} - somewhat similar to how $ \bar{w}_{-k}^l $'s were used in the third term of~\eqref{eq:taxes1}.
	
%
	
	\subsection{Mechanism}
	The actions set for agent $ki$ is now $ \mathcal{S}_{ki} = \mathbb{R}_{+} \times \mathbb{R}_{+}^{2L_{ki}} \times \mathbb{R}_{+} $ where an action is of the form $ s_{ki} = (y_{ki}, Q_{ki}, \rho_{ki}) $. 
	The allocation and tax function announced by the designer are exactly as in the previous case, with the only exception that the tax for any agent $ ki $ is now defined as
	\begin{gather}
	\label{EQt2}
	h^{t}_{ki}(s) = t_{ki} = \zeta (\rho_{ki} - r)^2 + \sum_{l \in \mL_{ki}} t_{ki}^l.
	\end{gather}
	If $ G_k^l \ge 2 $, then again using agents $ kj $ and $ ke $ as described after \eqref{EQt}, we have
	\begin{multline}
	t_{ki}^l = x_{ki} \alpha^l_{ki} {^2q_{kj}^l}  +  ({^2q_{ki}^l} -  {^1q_{ke}^l})^2  + (w_k^l - \bar{w}_{-k}^l)^2
	\\
	+   \eta \, {^2q_{kj}^l} ({^1q_{ki}^l} -  {^2q_{kj}^l})(m_k^l - \alpha_{ki}^l x_{ki}) \\ 
	\label{EQt21}
	{} + {} \xi \, \bar{w}_{-k}^l ( w_k^l - \bar{w}_{-k}^l ) ( c^l - \sum_{k^{\prime} \in \mK^l} m_{k^{\prime}}^l )
	\\
	-   \frac{\bar{\rho}_{-ki}}{N^l-1} \sum_{k^{\prime} j \in \mN^l \backslash \{ki\} } \alpha^l_{k^{\prime} j} {} ^2q_{k^{\prime} j^{\prime}}^l y_{k^{\prime} j}
	\end{multline}
	where agent $ k^{\prime} j^{\prime} $ is \tb{an agent from group $ k^\prime $ on link $ l $ who can be identified by the index $ g_{k^{\prime}}^l(j) - 1 $,} if $ G_{k^{\prime}}^l \ge 2 $ and $ \zeta,\eta,\xi $ are small enough positive constants. However if $ G_{k^{\prime}}^l = 1 $ we would use $ \bar{w}_{-k^{\prime}}^l $ instead of $ q_{k^{\prime} j^{\prime}}^l $.
	
	Similarly for $ G_k^l = 1 $, we have
	\begin{multline}
	t_{ki}^l = x_{ki} \alpha^l_{ki} \bar{w}_{-k}^l + (w_k^l - \bar{w}_{-k}^l)^2
	\\
	+ \eta \, \bar{w}_{-k}^l ({^1q_{ki}^l} -   \bar{w}_{-k}^l)(m_k^l - \alpha_{ki}^l x_{ki})
	\\
	\label{EQt22}
	{} +  \xi \, \bar{w}_{-k}^l (w_k^l - \bar{w}_{-k}^l) (c^l- \sum_{k^{\prime} \in \mK^l} m_{k^{\prime}}^l)
	\\
	-   \frac{\bar{\rho}_{-ki}}{N^l-1} \sum_{k^{\prime} j \in \mN^l \backslash \{ki\} } \alpha^l_{k^{\prime} j} {} ^2q_{k^{\prime} j^{\prime}}^l y_{k^{\prime} j}.
	\end{multline}
	In addition to previous definitions,
	\begin{gather} \label{EQRHO}
	\bar{\rho}_{-ki} \coloneqq \frac{1}{N-1} \sum_{k^{\prime}j \in \mN \backslash \{ki\}} \rho_{k^{\prime}j}.
	\end{gather}
	Denote the corresponding game by $ \mathfrak{G}_0 $.
	The implications of these modifications are discussed in the results section below.
	
	\subsection{Results}
	
	This new mechanism will also fully implement~\eqref{CP}. The only term in $ \widehat{u}_{ki} $ that is affected by $ \rho_{ki} $ is $ -(\rho_{ki} - r)^2 $, so all the Lemmas from Section \ref{secmech} are valid with minor modifications and the main result will follow using the same line of argumentation as for Theorem \ref{thmain}. Note here that, terms in $ \widehat{u}_{ki} $ affected by $ {^1q_{ki}^l}, {^2q_{ki}^l} $'s are the same as before but for $ y_{ki} $ there is a new term $ -(\rho_{ki} - r)^2 $ which is affected by it.
	
	\begin{theorem}[Full Implementation with SBB]\label{thmain2}
		For game $ \mathfrak{G}_0 $, there is a unique allocation, $ x $, corresponding to all NE. Moreover, $ x = x^{\star} $, the maximizer of \eqref{CP}, where individual rationality is satisfied for all agents in $ \mN $. Also, SBB is satisfied at all NE.
	\end{theorem}

	For economy of exposition, in the following, we only provide detailed proofs for the new properties of this mechanism, while we outline the proofs for the properties that are similar to those in the WBB mechanism.
	\begin{itemize}
		\item Primal Feasibility - Since allocation function is the same as before, this result holds here as well.
		
		\item Equal Prices at equilibrium - This was proved by taking price deviations only and keeping other parameters of the signal constant, so the same argument works here as well (noting that no new price related terms have been added in the new mechanism).
	\end{itemize}
	Before moving on to other results, we show that a common $ \rho_{ki} $'s emerges at equilibrium.
	
	\begin{lemma} \label{lemcmr}
		At any NE $ s = (y, P, \rho) $ of game $ \mathfrak{G}_0 $, we have $ \rho_{ki} = r $ $ ~\forall $ $ ki \in \mN $.
	\end{lemma}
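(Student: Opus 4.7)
The plan is to construct a profitable unilateral deviation for any agent $ki$ whose quoted $\rho_{ki}$ differs from the scaling factor $r$ at a candidate equilibrium, thereby contradicting the NE property. The argument hinges on a careful accounting of how $\rho_{ki}$ enters agent $ki$'s own payoff $\widehat{u}_{ki}(s)$. First I would observe that the scaling factor $r$ defined in~\eqref{EQx0}--\eqref{EQA1} is determined entirely by the demand profile $y$ and is therefore insensitive to $\rho_{ki}$; second, the quantity $\bar{\rho}_{-ki}$ appearing inside the link-wise tax components~\eqref{EQt21}--\eqref{EQt22} is, by definition~\eqref{EQRHO}, an average of $\rho_{k'j}$ taken over $k'j\ne ki$, and so is likewise independent of $\rho_{ki}$.

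Consequently, the only term in $\widehat{u}_{ki}(s)$ that depends on $\rho_{ki}$ is the quadratic penalty $-\zeta(\rho_{ki}-r)^2$ contributed by~\eqref{EQt2}. Consider then the deviation $\rho_{ki}':=r$ with $y_{ki}$ and $Q_{ki}$ held fixed: the scaling factor $r$ and the messages of all other agents remain unchanged, so the resulting change in utility is exactly $\zeta(\rho_{ki}-r)^2$, which is strictly positive whenever $\rho_{ki}\ne r$. This profitable deviation contradicts the NE property and yields $\rho_{ki}=r$ for every $ki\in\mathcal{N}$.

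There is essentially no substantive obstacle here: the lemma is a direct consequence of the way the mechanism was engineered in Section~\ref{secmechown}, where the term $\zeta(\rho_{ki}-r)^2$ was introduced precisely to force agreement on the scaling factor at equilibrium. The only care needed is to verify by inspection of~\eqref{EQt2}, \eqref{EQt21}, \eqref{EQt22}, and \eqref{EQRHO} that no other appearance of $\rho$ in the tax feeds back into agent $ki$'s own utility through $\rho_{ki}$, which is immediate.
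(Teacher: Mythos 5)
Your proposal is correct and follows essentially the same argument as the paper: deviate to $\rho_{ki}'=r$, note that $r$ depends only on $y$ and that $\bar{\rho}_{-ki}$ excludes $\rho_{ki}$, so the utility change is exactly $\zeta(\rho_{ki}-r)^2>0$, contradicting the NE property. Your extra check that $\bar{\rho}_{-ki}$ does not feed $\rho_{ki}$ back into agent $ki$'s own tax is a welcome bit of added care but does not change the substance.
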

	\begin{proof}
		Please see Appendix~\ref{app9}.
	\end{proof}

	Note however that although $ \rho_{ki} $ are same for all $ ki $ at any equilibrium, that common value, $r$,  will be different across equilibria. This is obvious since magnitude of vector $ y $ changes across equilibria.
	
	Now we continue with other properties from Section \ref{secmech}.
	\begin{itemize}
		\item Dual Feasibility - This is obvious here as well.
		
		\item Complimentary Slackness - This was proved by taking only price deviations and hence the same argument works here as well.

		\item Stationarity - Compared to the WBB case, the additional term in the derivative  is
		\begin{equation}
		\frac{\partial \widehat{u}_{ki}}{\partial y_{ki}^{\prime}} \Big \vert_{new} = \underbrace{\frac{\partial \widehat{u}_{ki}}{\partial y_{ki}^{\prime}} \Big \vert_{old}}_{T_1} - \underbrace{2 \zeta (\rho_{ki}^{\prime} - r^{\prime}) \left(-\frac{\partial r^{\prime}}{\partial y_{ki}^{\prime}}\right)}_{T_2}
		\end{equation}
		The claim is, as before, that if $ T_1 $ is positive, agent $ ki $ can increase $ y_{ki}^{\prime} $ from $ y_{ki} $ to be better-off. However agent $ ki $ has to ensure that he deviates with $ \rho_{ki}^{\prime} $ simultaneously to make it equal to $ r^{\prime} $, so that the contribution of the $ T_2 $ term to the derivative above continues to be zero. The only thing left to notice here is that the change in $ \rho_{ki}^{\prime} $ is such that not only the term $ T_2 $ is zero but also that the contribution of term $ - \zeta (\rho_{ki}^{\prime} - r^{\prime})^2 $ to the utility is zero before and after deviation - so this deviation doesn't change other partial derivatives. Similar argument also works when $ T_1 $ is negative and we get the stationarity property here as well.
	\end{itemize}
	With the above properties, unique allocation $ x^{\star} $ at all equilibria is guaranteed, and, as before, the prices are equal to $ \lambda^{\star} $.
	
	Existence of equilibria is verified in Appendix~\ref{app7}. The arguments used are similar to the ones in the proof of Lemma~\ref{lemexis}.

	\begin{itemize}
		\item Individual Rationality - This is obvious in here because money from the previous case is only being redistributed here, so if the mechanism there was individually rational it will be here too.
	\end{itemize}
	
	\begin{lemma}[Strong Budget Balance] \label{lemSBB}
		At any NE $ s = (y,P, \rho) $ of game $ \mathfrak{G}_0$, with corresponding taxes $ \{t_{ki}\}_{ki \in \mN} $, we have $ \sum_{ki \in \mN} t_{ki} = 0 $.
	\end{lemma}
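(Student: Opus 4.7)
The plan is to show that at any NE, the tax function reduces to a simple form for which the sum-across-agents is manifestly zero by a double-counting argument. First I would invoke Lemma~\ref{lemcmr} to eliminate the penalty term $\zeta(\rho_{ki}-r)^2$, since $\rho_{ki}=r$ for every agent at equilibrium. Next, applying Lemma~\ref{lemcmp} (equal prices) and Lemma~\ref{lemCompSlac} (complementary slackness), I would argue that in each $t_{ki}^l$ the squared terms $({}^2q_{ki}^l-{}^1q_{ke}^l)^2$ and $(w_k^l-\bar{w}_{-k}^l)^2$ vanish, and that both the $\eta$-term and the $\xi$-term vanish as well: the $\xi$-term vanishes because $(w_k^l-\bar{w}_{-k}^l)=0$, while the $\eta$-term vanishes because, by the cyclic ordering and Lemma~\ref{lemcmp}, ${}^2q_{kj}^l={}^1q_{ki}^l=p_{ki}^l$ (and analogously $\bar{w}_{-k}^l=p_{ki}^l$ when $G_k^l=1$).

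What survives in $t_{ki}^l$ is therefore just the allocation-payment $x_{ki}\alpha_{ki}^l p_{ki}^l$ minus the redistribution term. For the redistribution term I would observe that $\bar{\rho}_{-ki}=r$ (again by Lemma~\ref{lemcmr}) and that ${}^2q_{k'j'}^l=p_{k'j}^l$ at NE (when $G_{k'}^l\ge 2$), while the designer substitutes $\bar{w}_{-k'}^l=p_{k'j}^l$ when $G_{k'}^l=1$; in either case the substitution produces the same expression. Using $x_{ki}=ry_{ki}$, I would conclude that at NE
\begin{equation}
t_{ki} \;=\; r \sum_{l \in \mL_{ki}} \left[ \alpha_{ki}^l p_{ki}^l y_{ki} - \frac{1}{N^l-1} \sum_{k'j \in \mN^l \setminus \{ki\}} \alpha_{k'j}^l p_{k'j}^l y_{k'j} \right].
\end{equation}

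The remaining step is the combinatorial computation. Summing over $ki$ and interchanging the order of summation (using that $l\in\mL_{ki}\Leftrightarrow ki\in\mN^l$), the first term becomes $r\sum_{l}\sum_{ki\in\mN^l}\alpha_{ki}^l p_{ki}^l y_{ki}$. In the second term, for each fixed link $l$ each agent $k'j\in\mN^l$ appears as the ``other'' agent exactly $N^l-1$ times (once for every $ki\in\mN^l\setminus\{k'j\}$), so the factor $1/(N^l-1)$ cancels and the double sum collapses to the identical expression $r\sum_{l}\sum_{k'j\in\mN^l}\alpha_{k'j}^l p_{k'j}^l y_{k'j}$. The two contributions cancel, yielding $\sum_{ki\in\mN}t_{ki}=0$.

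The main thing to be careful about is the bookkeeping for the ``other agent'' indices ($kj$, $ke$, $k'j'$) and the small-group corner case $G_k^l=1$: one must verify uniformly that ${}^2q_{k'j'}^l$ (or its $G_{k'}^l=1$ substitute $\bar{w}_{-k'}^l$) reduces to $p_{k'j}^l$ at NE so that the redistribution term is exactly the symmetric expression~\eqref{EQALTREP} that motivated the construction. Once that identification is made, the double-counting identity does the rest.
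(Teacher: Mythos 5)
Your proposal is correct and follows essentially the same route as the paper: use the equilibrium lemmas (equal prices, complementary slackness, $\rho_{ki}=r$) to kill every term of $t_{ki}$ except the payment $x_{ki}\sum_l \alpha_{ki}^l p_{ki}^l$ and the redistribution term, then cancel the two by the double-counting identity in which each agent appears as the ``other'' agent exactly $N^l-1$ times on each link. The paper phrases the final step as computing the coefficient of $x_{ki}$ in the total sum rather than interchanging the order of summation, but this is the same calculation; your extra care with the predecessor-index bookkeeping and the $G_{k'}^l=1$ corner case is a welcome elaboration of what the paper asserts without detail.
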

	\begin{proof}
		Please see Appendix~\ref{app8}.
	\end{proof}

	\begin{proof}[Proof of Theorem \ref{thmain2}]
		By the preceding properties, at all equilibria the allocation is $ x^{\star} $ and prices $ \lambda^{\star} $. Then SBB and individual rationality provide the desired full implementation.
	\end{proof}

\section{Discussion and Future work} \label{secgen}
	
	In this paper, we present a mechanism that fully implements the sum of utilities maximizing allocation for agents who share data on a  multicast/multirate network. The scope of application of this model goes beyond just data provision on networks. Another example (as mentioned in the Introduction) is the provision of security products for server farms. The design also encompasses two important auxiliary properties: off-equilibrium feasibility of allocation (using the radial projection function) and SBB at NE. In addition the overall size of the message space is linear in the number of agents, $ N $.

	The work in~\cite{demosmulticorrection} is  similar to the work in this paper except for two things. Firstly, the mechanism doesn't posses the off-equilibrium feasibility property and secondly there are instances of the centralized problem where the claimed NE profiles do indeed have profitable unilateral deviations, thereby contradicting the full implementation claim. 

	In the spirit of some recent works~\cite{chen2002family,healy2012designing} for Walrasian and Lindahl allocations, the main continuation of this work would be to design the allocation/tax functions where in addition to full implementation, convergence of certain classes of learning algorithms is also taken as a design objective.
This way we can convert our ``learning ready'' mechanism into a guaranteed learning mechanism.
Designing so that the ensuing game is a potential game or a super-modular game is too stringent for this problem. But the design technique used in~\cite{healy2012designing}, namely contractive best response correspondence, is slightly relaxed. Basic design parameters like the structure of the message space (consisting of demands and prices) can be kept the same in this case.
	
	\tb{Another future direction, from an engineering perspective, can be to impose additional constraints on the mechanism design so that the messages in the resulting mechanism can be exchanged in a distributed manner and can possibly eliminate the need for a centralized coordinator to collect all quoted message and impose allocation and taxes. Naturally, such a development would require the assumption that even without the presence of a centralized coordinator agents will pay the taxes that the mechanism contract requires and that allocation (as defined by the mechanism) can be effected.}
	
	\tb{For the mechanism defined in this paper the communication structure is as follows. The prices $ {^1q_{ki}^l},{^2q_{ki}^l} $ can be communicated only locally within group $ k $ and link $ l $. The quantity $ \bar{w}_{-k}^l $ requires communication between groups and for this a leader can be designated from each group, who will communicate with the other leaders. Calculating the allocation poses other issues. The leader of group $ k $ on link $ l $ can also calculate $ n_k^l = \max_{i \in \mathcal{G}_k^l } \{ \alpha_{ki}^l y_{ki} \} $ and communicate it with other groups on link $ l $ to calculate $ r^l $ and then communication is needed across links to get $ r = \min_{l \in \mathcal{L}} r^l $.}

\tb{\subsubsection*{Acknowledgment} Authors would like to thank the anonymous reviewers of this paper for their reviews and suggestions for possible future research direcitons.}

	
\appendices

\section{Proof of Lemma~\ref{lemPF}}
\label{app1}

\begin{proof}
	Constraint $ C_1 $ is clearly always satisfied. For $ \novec{y} = \novec{0} $ we have $ x = 0 $ and $ m = 0 $, so constraints $ C_2 $ and $ C_3 $ are also clearly satisfied. Next we show $ C_2 $ and $ C_3 $ are satisfied for any $ y \ne 0 $ as demand. Firstly $ r < + \infty $, since there exists at least one link $ d $ with $ \vert S^d(y) \vert \ge 1 $ and thus $ r^d < + \infty $. Now, for any link $ l $, there are the following two cases. If $ \vert S^l(y) \vert = 0 $, then the allocation for all agents in $ \mN^l $ is zero (along with the corresponding $ m_k^l $'s), so $ C_2 $ and $ C_3 $ for those links is satisfied. If $ \vert S^l(y) \vert \ge 1 $ then
	\begin{gather}
	\sum_{k \in \mK^l} m_k^l = r \sum_{k \in \mK^l} n_k^l \le r^l \sum_{k \in \mK^l} n_k^l \le \frac{c^l}{\sum_{k \in \mK^l} n_k^l} \sum_{k \in \mK^l} n_k^l = c^l
	\end{gather}
	where the first inequality holds because $ r $ is the minimum of all $ r^l $'s. The second inequality is equality if $ \vert S^l(y) \vert \ge 2 $ and is strict only if $ \vert S^l(y) \vert = 1 $ (see second sub-case in \eqref{EQA1}). For $ C_2 $, take any agent $ ki $ and link $ l \in \mL_{ki} $
	\begin{gather}
	\alpha_{ki}^{l} x_{ki} = r \alpha_{ki}^{l} y_{ki} \le  r n_k^l = m_k^l
	\end{gather}
	where the inequality holds because $ n_k^l $ is the maximum over $ \alpha_{ki}^l y_{ki} $'s for all $ i \in \mG_k^l $.
\end{proof}


\section{Proof of Lemma~\ref{lemcmp}}
\label{app2}

\begin{proof}
	First we show the second part of the lemma, so suppose there are agents $ i,e \in \mG_k^l $ as above, for whom $ {^2q_{ki}^l} \ne {^1q_{ke}^l} $. If agent $ ki $ deviates with $ {^2q_{ki}^l}^{\prime} = {^1q_{ke}^l} $ then we can write the difference in agent $ ki $'s utility after and before deviation by just comparing tax for link $ l $ (since allocation and tax for other links don't change)
	\begin{multline}
	\Delta \widehat{u}_{ki} = - ({^2q_{ki}^l}^{\prime} - {^1q_{ke}^l})^2 + ({^2q_{ki}^l} - {^1q_{ke}^l})^2
	\\
	= ({^2q_{ki}^l} - {^1q_{ke}^l})^2 > 0
	\end{multline}
	which means that the deviation was profitable. This gives us the second part of the lemma. (In addition to defining $ {^2q_{ki}^l} = {^1q_{ke}^l} = p_{ke}^l $ when $ G_k^l \ge 2 $, we will also denote $ {^1q_{ki}^l} = w_k^l = p_{ki}^l $ when $ \mG_k^l = \{i\} $).
	
	For the first part, suppose there is a link $ l $ for which $ (w_k^l)_{k \in \mK^l} $ are not all equal, at equilibrium. Clearly then there is a group $ k \in \mK^l $ for which $ w_k^l > \bar{w}_{-k}^l $ (this can be seen from \eqref{EQpm}). We will show that some agent $ i \in \mG_k^l $ can deviate by reducing price $ {^1q_{ki}^l} $ and be strictly better off, thereby contradicting the equilibrium condition. First we take the case when the group $ k $ is such that $ G_k^l \ge 2 $ and then $ G_k^l = 1 $.
	
	Since $ w_k^l > \bar{w}_{-k}^l $ we must have $ w_k^l > 0 $ and since $ w_k^l = \sum_{i \in \mG_k^l} {^1q_{ki}^l} $ there must be an agent $ i \in \mG_k^l $ for whom $ {^1q_{ki}^l} > 0 $. Take deviation by this agent $ ki $ as $ {^1q_{ki}^l}^{\prime} = {^1q_{ki}^l} - \epsilon > 0 $, for which we can write the difference in utility, just as before, as
	\begin{multline} \label{EQdelta}
	\Delta \widehat{u}_{ki}
	= - \epsilon^2  + 2  \epsilon (w_k^l - \bar{w}_{-k}^l) + \eta \,\epsilon p_{ki}^l (m_k^l - \alpha_{ki}^l x_{ki})
	\\
	+ \xi \,\epsilon \bar{w}_{-k}^l (c^l - \sum_{k \in \mK^l} m_k^l) \\
	= \epsilon \Big( -\epsilon +  2 (w_k^l - \bar{w}_{-k}^l) + \eta \, p_{ki}^l (m_k^l - \alpha_{ki}^l x_{ki})
	\\
	+ \xi \, \bar{w}_{-k}^l (c^l - \sum_{k \in \mK^l} m_k^l) \Big) = \epsilon (-\epsilon + a)
	\end{multline}
	where $ a > 0 $ because of Lemma \ref{lemPF} and the fact that $ w_k^l > \bar{w}_{-k}^l $. So by taking $ \epsilon $ such that $ \min\,\{a, {^1q_{ki}^l}\} > \epsilon > 0 $, the above deviation will be a profitable one for agent $ ki $. This gives the result for $ G_k^l \ge 2 $.
	
	For $ G_k^l = 1 $, say $ \mG_k^l = \{i\} $, we have that $ {^1q_{ki}^l} = w_k^l > \bar{w}_{-k}^l $. This again means that $ {^1q_{ki}^l} > 0 $ and we take the deviation $ {^1q_{ki}^l}^{\prime} = {^1q_{ki}^l} - \epsilon > 0 $ and get
	\begin{multline}
	\Delta \widehat{u}_{ki} = \epsilon \Big( -\epsilon + 2 (w_k^l - \bar{w}_{-k}^l) + \eta \, \bar{w}_{-k}^l (m_k^l - \alpha_{ki}^l x_{ki})
	\\
	+ \xi \, \bar{w}_{-k}^l (c^l - \sum_{k \in \mK^l} m_k^l)  \Big).
	\end{multline}
	Following the same argument as above we will get our result here as well.
\end{proof}


\section{Proof of Lemma~\ref{lemCompSlac}}
\label{app3}

\begin{proof}
	Suppose there is a link $ l $ for which $ w^l > 0 $ and $ \sum_{k \in \mK^l} m_k^l < c^l $. Take any group $ k \in \mK^l $ and an agent $ i \in \mG_k^l $ such that $ {^1q_{ki}^l} = p_{ki}^l > 0 $ (there is such an agent because $ w^l = \sum_{i \in \mG_k^l} {^1q_{ki}^l} > 0 $). Take the deviation $ {^1q_{ki}^l}^{\prime} = {^1q_{k,i}^l} - \epsilon > 0 $ and we get (using same arguments as in \eqref{EQdelta} and noting that $ w_k^l = \bar{w}_{-k}^l = w^l $)
	\begin{multline}
	\Delta \widehat{u}_{ki} = \epsilon \Big( -\epsilon + \eta \underbrace{p_{ki}^l (m_k^l - \alpha_{ki}^l x_{ki})}_{\ge 0 \text{ by Lemma \ref{lemPF}}} +  \xi \, w^l (c^l - \sum_{k \in \mK^l} m_k^l) \Big)
	\\
	= \epsilon (-\epsilon + a).
	\end{multline}
	where $ a > 0 $ due to Lemma \ref{lemPF} and the assumption that $ w^l (c^l - \sum_{k \in \mK^l} m_k^l)  > 0 $. This gives us that $ w^l (c^l - \sum_{k \in \mK^l} m_k^l)  = 0 $ for all $ l \in \mL $ at equilibrium.
	
	Now suppose there is an agent $ ki $ for whom $ {^1q_{ki}^l} = p_{ki}^l > 0 $ and $ \alpha_{ki}^l x_{ki} < m_k^l $. Same as before, we take the deviation $ {^1q_{ki}^l}^{\prime} = {^1q_{ki}^l} - \epsilon > 0 $,
	\begin{gather}
	\Delta \widehat{u}_{ki} = \epsilon \left( -\epsilon + \eta \, p_{ki}^l (m_k^l - \alpha_{ki}^l x_{ki}) \right)  = \epsilon (-\epsilon  + a)
	\end{gather}
	where $ a > 0 $ by assumption. This gives us that $ p_{ki}^l (m_k^l - \alpha_{ki}^l x_{ki}) = 0 $ for all $ ki \in \mN^l $ and $ l \in \mL $, at equilibrium.
\end{proof}


\section{Proof of Lemma~\ref{lemStat}}
\label{app4}

\begin{proof}
	\eqref{EQS2} is true by construction since we defined $ w_k^l = \sum_{i \in \mG_k^l} p_{ki}^l $ and by Lemma \ref{lemcmp} we have $ p_{ki}^l = {^1q_{ki}^l} $ and $ w_k^l = w^l $.
	
	At any NE, agent $ ki $'s utility in the game $ \widehat{u}_{ki}(s^{\prime}) = v_{ki}(h^{x}_{ki}(s^{\prime})) - h^{t}_{ki}(s^{\prime}) $ as a function of his message $ s_{ki}^{\prime} = (y_{ki}^{\prime}, Q_{ki}^{\prime}) $, with $ s_{-{ki}} $ fixed, should have a global maximum at $ {s}_{ki} = ({y}_{ki}, P_{ki}) $.
	This would mean that if this function was differentiable w.r.t. $y_{ki}^{\prime}$ at $s$, the partial derivatives w.r.t. $ y_{ki}^{\prime} $ at $ s $ should be $ 0 $.
	However, since our allocation dilates/shrinks demand vector $ \novec{y}^{\prime} $ on to the feasible region, it could be the case that increasing and decreasing $ y_{ki}^{\prime} $ gives allocations lying on different hyperplanes, meaning that the transformation from $ \novec{y}^{\prime} $ to $ {x}^{\prime} $ is different on both sides of $ {y}_{ki} $ and therefore we conclude that $ \widehat{u}_{ki} $ could be non-differentiable w.r.t $ y_{ki}^{\prime} $ at $ s $.
	Important thing here however is to notice that right and left derivatives exist, it's just that they may not be equal.  Hence we can take derivatives on both sides of $ {y}_{ki} $ as (noting that derivative of the other terms in utility involving $ x_{ki} $ or involving $ m_k^l $ will be zero due Lemma \ref{lemcmp})
	\begin{subequations}
		\label{EQder}
	\begin{gather}
		\label{EQderA}
	\frac{\partial \widehat{u}_{ki}}{\partial y_{ki}}^{\prime} \Big \vert_{y_{ki}^{\prime} \downarrow {y}_{ki}} = \left( v_{ki}^{\prime}({x}_{ki}) - \sum_{l \in \mL_{ki}} p_{ki}^l \alpha^l_{ki}  \right) \frac{\partial x_{ki}^{\prime}}{\partial y_{ki}^{\prime}} \Big \vert_{y_{ki}^{\prime} \downarrow {y}_{ki} }
	\\[0.1cm]
		\label{EQderB}
	\frac{\partial \widehat{u}_{ki}}{\partial y_{ki}}^{\prime} \Big \vert_{y_{ki}^{\prime} \uparrow {y}_{ki}} = \left( v_{ki}^{\prime}({x}_{ki}) - \sum_{l \in \mL_{ki}} p_{ki}^l \alpha^l_{ki}  \right) \frac{\partial x_{ki}^{\prime}}{\partial y_{ki}^{\prime}} \Big \vert_{y_{ki}^{\prime} \uparrow {y}_{ki} }
	\end{gather}
	\end{subequations}
	We will first show that the $ \partial x_{ki}/\partial y_{ki} $ term above (for either equation) is always positive. If $ y = 0 $ then clearly this is true, because if any agent $ ki $ demands $ y_{ki} > \epsilon $ while $ y_{-ki} = 0 $ then clearly $ x_{ki} > 0 $ (in fact the allocation is differentiable at $ y = 0 $). If $ y \ne 0 $ from \eqref{EQx}, we can write
	\begin{equation}
	\beta \coloneqq \frac{\partial x_{ki}}{\partial y_{ki}} = \frac{\partial (r y_{ki})}{\partial y_{ki}} = r + y_{ki} \frac{\partial r}{\partial y_{ki}} = r^q + y_{ki} \frac{\partial r^q}{\partial y_{ki}}
	\end{equation}
	where $ r = r^q $.
	
	From here we divide our arguments into following cases: (A) $ ki \notin \mN^q $ ($ \Leftrightarrow $ $ i \notin \mG_k^q $); (B) $ ki \in \mN^q $, $ i \notin \arg\max_{j \in \mG_k^q}\{ \alpha_{kj}^q y_{kj }\} $ and (C) $ ki \in \mN^q $, $ i \in \arg\max_{j \in \mG_k^q}\{ \alpha_{kj}^q y_{kj }\} $.
	
	(A) For this clearly $ \partial r^q / \partial y_{ki} = 0 $ and this makes $ \beta = r^q > 0 $.
	
	(B) Since value of $ r^q $ depends only on the value of $ (n_k^q)_{k \in \mK^q} $, and in this case changes in $ y_{ki} $ don't affect $ n_k^q $ we can see that $ \partial r^q / \partial y_{ki} = 0 $ and so $ \beta = r^q > 0 $.
	
	(C) We divide this case into two cases: $ \vert S^q(y) \vert \ge 2 $ or $ \vert S^q(y) \vert = 1 $. If $ \vert S^q(y) \vert \ge 2 $ then
	\begin{equation}
	\beta = r^q + y_{ki} \frac{\partial r^q}{\partial y_{ki}} = r^q + y_{ki} \left( - \frac{c^q}{(\sum_{k_0 \in \mK^q} n_{k_0}^q)^2} \right) \frac{\partial n_k^q}{\partial y_{ki}}.
	\end{equation}
	Now $ \frac{\partial n_k^q}{\partial y_{ki}} $ is either $ \alpha_{ki}^q $ or $ 0 $. If it is $ 0 $ then $ \beta = r^q > 0 $. Otherwise we have
	\begin{equation}
	\beta = \frac{(r^q)^2}{c^q} \sum_{k_0 \in \mK^q \backslash \{k\} } n_{k_0}^q
	\end{equation}
	which is positive because $ \vert S^q (y) \vert \ge 2 $, since then there is at least one positive term in the summation. For $ \vert S^q(y) \vert = 1 $, we will consider $ S^q(y) = \{k\} $; else in case $ S^q(y) = \{k_0\} \ne \{k\} $, taking the derivative would give the same expression as above. For $ S^q(y) = \{k\} $ we will get
	\begin{subequations}
	\begin{gather}
	r^q = \frac{c^q}{n_k^q} - \frac{c^q}{n_k^q(n_k^q + 1)} = \frac{c^q}{n_k^q} \left( 1 - \frac{1}{n_k^q + 1} \right) = \frac{c^q}{n_k^q + 1}
	\\
	\beta = r^q + y_{ki} \, \frac{\partial r^q}{\partial n_k^q}\, \frac{\partial n_k^q}{\partial y_{ki}}
	\end{gather}
	\end{subequations}
	where $ \frac{\partial n_k^q}{\partial y_{ki}} $ is either $ 0 $ or $ \alpha_{ki}^q $. If it is $ 0 $ then $ \beta = r^q > 0 $ and if it is $ \alpha_{ki}^q $ then we have $
	\beta = \frac{(r^q)^2}{c^q} $.
	Hence we have $ \beta > 0 $ in all cases.
	
	Referring to \eqref{EQder}, there are two possibilities, the first term on RHS in both equations in \eqref{EQder} is positive or negative. If it's positive, then we can see from \eqref{EQderA} that by increasing $ y_{ki}^{\prime} $ from $ {y}_{ki} $ (and therefore $ x_{ki}^{\prime} $ from $ {x}_{ki} $) agent $ ki $ can increase his pay-off, which contradicts equilibrium. Now similarly consider the first term in RHS of \eqref{EQder} to be negative, then from \eqref{EQderB}, agent $ ki $ can reduce $ y_{ki}^{\prime} $  from $ y_{ki} $ to get a better pay-off. But the downward deviation in $ y_{ki}^{\prime} $ is only possible if $ {y}_{ki} > 0 $ ($ \Leftrightarrow~ {x}_{ki} > 0 $). Hence we conclude~\eqref{EQKKTst1}, \eqref{EQKKTst2} from the statement of this lemma.
\end{proof}


\section{Proof of Lemma~\ref{lemexis}}
\label{app5}

\begin{proof}
We will show that $ s = (y,Q) $ is a NE of the game $ \mathfrak{G} $, where for each agent $ ki \in \mN $, $ y_{ki} = \rho x_{ki}^\star $ (for any $ \rho > 0 $) and
\begin{equation}\label{EQcandNE}
{^1q_{ki}^l} = {\mu_{ki}^l}^\star \quad \text{and}  \quad  {^2q_{ki}^l} = {\mu_{ke}^l}^\star,
\end{equation}
where agent $ ke $ is the one identified by index $ g_k^l(i)+1 $. Here $ x^\star,\mu^\star $ are the primal-dual variables that satisfy the KKT conditions of~\eqref{CP}.
	
Define the function $f: \mathcal{S}_{ki} \rightarrow \mathbb{R}$  as follows
\begin{subequations}
\begin{align}	
f(s_{ki}^\prime) &\triangleq \widehat{u}_{ki}(s_{ki}^\prime,s_{-ki}) \\
&= v_{ki}\big(h^x_{ki}(s_{ki}^\prime,s_{-ki})\big) - h^t_{ki}(s_{ki}^\prime,s_{-ki}).
\end{align}
\end{subequations}
The above claim will be shown in two steps: In Step~1 we show that message $ s $ results in allocation $ x^\star $ and additionally we have $ \nabla f(s_{ki}) = 0 $. In Step~2 we show that for any agent $ ki $, unilateral deviations from $ s_{ki} $ are not strictly profitable.

\textbf{Step 1.} Assumption (A4) implies that with $ y = \rho x^\star $, while calculating allocation $ \big( h_{ki}^x(s) \big)_{ ki \in \mN } $, sub-case 1 in eq.~\eqref{EQA1} will be valid. This is turn implies that $ h^x(s) = x^\star $. Thus, message $ y = \rho x^\star $ produces allocation $ x^\star $. Now, using the complimentary slackness property of KKT and the fact that in the claimed NE above, we have for any $ ki \in \mN $, $ {^2q_{ki}^l} = {^1q_{ke}^l} = {\mu_{ke}^l}^\star $, it can be easily shown that the partial derivative of $ f $ w.r.t. $ ^1q_{ki}^l $ evaluated at the point $ s_{ki} $  is zero and the same is true for the partial derivative w.r.t $ ^2q_{ki}^l $. Finally using the stationarity condition in KKT we deduce that the partial derivative of $ f $ w.r.t. $ y_{ki} $ evaluated at the point $ s_{ki} $ is zero. Thus, $ \nabla f(s_{ki}) = 0 $.

\textbf{Step 2.}	We now check for profitable deviations. This task is accomplished in two steps.
In Step 2a, we show that any interior local extrema $ \tilde{s}_{ki} = \big( \tilde{y}_{ki}, {^1\tilde{q}_{ki}^l}, {^2\tilde{q}_{ki}^l} \big) $ of $ f(\cdot) $ satisfies $ \nabla f(\tilde{s}_{ki}) = 0 $. Furthermore, we show that any interior local extrema $ \tilde{s}_{ki}$ produces allocation $ h^x(\tilde{s}_{ki},s_{-ki}) $ and prices that satisfy the KKT conditions as $ x^\star,\mu^\star $.
In Step 2b, using Step 2a, we show that all interior local extrema are local maxima.
As a result of this and using the fact that by construction, $ f(s_{ki}^\prime) $ is continuous w.r.t. $ s_{ki}^\prime $, we conclude that $ f(\cdot) $ can have only a single local maximum and this is also the  global maximum. This is because all local extrema are necessarily characterized by $ \nabla f(\tilde{s}_{ki}) = 0 $ (Step 2a) and there cannot be multiple local maxima in the interior without a local minima between them. Furthermore, using this argument we know that this global  maximizer has to be in the interior. Finally, using the fact that $ \nabla f(s_{ki}) = 0 $ (Step 1) we conclude that this maximizer is $ s_{ki} $. In other words, agent $ ki $ cannot be strictly better off by unilaterally deviating from $ s_{ki} $. As this is checked for any agent $ ki \in \mN $, we conclude that above mentioned message $ s=(y,Q) $ is a NE.

\textbf{Step 2a.} First we will show that without the gradient being zero, there cannot be a local extremum. Note that by construction, $ f(s_{ki}^\prime) $ is continuous w.r.t. $ s_{ki}^\prime  $.	 Since $ f(\cdot) $ is differentiable w.r.t. $ {^1q_{ki}^l}, {^2q_{ki}^l} $ variables (at all points), derivatives w.r.t. these indeed have to be $ 0 $ at any interior local extremum. This implies equal prices and complimentary slackness properties (using arguments from respective proofs) and for the remaining term we can write
	\begin{gather}
	\frac{\partial f}{\partial y_{ki}} = \left(v_{ki}^{\prime}(x_{ki}) - \sum_{l \in \mL_{ki}} \alpha^l_{ki} {p_{ki}^l}\right) \left(\frac{\partial x_{ki}}{\partial y_{ki}}\right).
	\end{gather}
	Note that in the proof of Lemma~\ref{lemStat}, we have shown that $ \beta \coloneqq \frac{\partial x_{ki}}{\partial y_{ki}} > 0 $ always. So at the points of non-differentiability, $ \beta $ will have a jump discontinuity, however it will be positive on either side. It is then clear that without satisfying $ v_{ki}^{\prime}(x_{ki}) = \sum_{l \in \mL_{ki}} \alpha^l_{ki} {p_{ki}^l} $, there cannot be a local extremum. This implies that $ \nabla f(\tilde{s}_{ki}) = 0 $ at any interior local extremum. Furthermore, KKT conditions must be satisfied at the corresponding allocation $ h^x(\tilde{s}_{ki},s_{-ki}) $ and prices $ \mu^\star,\lambda^\star $. For this note that $ {^1\tilde{q}_{ki}^l} = {^2q_{kj}^l} $ (where $ kj $ is the agent with index $ g_k^l(i) - 1 $) and from~\eqref{EQcandNE} we know that $ {^2q_{kj}^l} = {\mu_{ki}^l}^\star $. Thus $ {^1\tilde{q}_{ki}^l} = {\mu_{ki}^l}^\star $ and similarly $ {^2\tilde{q}_{ki}^l} = {\mu_{ke}^l}^\star $ (where $ ke $ is the agent with index $ g_k^l(i) + 1 $). This combined with $ v_{ki}^{\prime}(x_{ki}) = \sum_{l \in \mL_{ki}} \alpha^l_{ki} {p_{ki}^l} $ from above, implies that KKT is satisfied at the local extrema.

\textbf{Step 2b.} The Hessian $ H $ of $ f(\cdot) $ w.r.t. $ s_{ki}^\prime $ is of size $ (2L_{ki}+1) \times (2L_{ki}+1) $, where the 1st row and column represent $ y_{ki}^\prime  $ and the two subsequent sets of $ L_{ki} $ rows and columns represent $ \{{^1q_{ki}^l}^\prime \}_{l \in \mL_{ki}} $ and $ \{{^2q_{ki}^l}^\prime \}_{l \in \mL_{ki}} $. We want $ H $ (evaluated at any local extremum) to be negative definite.
Looking at the diagonal entries, and using what we have derived in the previous paragraph, it can be calculated that the only non-zero entries at any local extrema are negative
$ f_{{^1q} {^1q}} \coloneqq \frac{\partial^2 f}{\partial {^1q_{ki}^{l}} \partial {^1q_{ki}^{l}} } = -2 $,
$ f_{{^2q}{^2q}} \coloneqq \frac{\partial^2 f}{\partial {^2q_{ki}^{l}}\partial {^2q_{ki}^{l}} } = -2 $,
$ f_{yy} \coloneqq \frac{\partial^2 f}{\partial y_{ki}^2} = v_{ki}^{\prime \prime}(x_{ki}) \left(\frac{\partial x_{ki}}{\partial y_{ki}}\right)^2 $ (due to strict concavity of $ v_{ki} $).
Also notice that all off-diagonal entries, except the first $ L_{ki}+1 $ in the first row and column, are zero.
Finally, note that due to assumption (A2), all prices are finite at local extremum and so the aforementioned non-zero entries
\begin{multline}
f_{{^1q}y} \coloneqq \frac{\partial^2 f}{\partial {^1q_{ki}^l} \partial y_{ki}} = \eta \, {^2q_{kj}^l} \left(\alpha_{ki}^l \frac{\partial x_{ki}}{\partial y_{ki}} - \frac{\partial m_k^l}{\partial y_{ki}} \right)
\\
{}+{} \xi \, \bar{w}_{-k}^l \left(\sum_{k^{\prime} \in \mK^l} \frac{\partial m_{k^{\prime}}^l }{\partial y_{ki}} \right)
\end{multline}
are finite.
We now show that the roots of the characteristic polynomial of $ H $ (i.e. its eigenvalues) all become negative for $ \eta,\xi $ chosen to be sufficiently small.

For this, we take a generic matrix $ A_0 $, which is similar in structure to $ H $ and whose entries have the same dependence on $ \vert y \vert $ as $ H $. So $ A_0 $ is of the form
	$
	A_0 = \begin{bmatrix}
	A & 0\\
	0 & D
	\end{bmatrix}
	$
	where $ D = (-2) I_{L_{ki}} $. In this case we know that eigenvalues of $ A $ and $ D $ together will give us all the eigenvalues of $ A_0 $. Clearly eigenvalues of $ D $ are $ -2 $ repeated $ L_{ki} $ times, so all that we now need to do is check whether all eigenvalues of $ A $ are negative. Entries in $ A $ are $ a_{11} = -\frac{a}{\vert y \vert^2} $, $ a_{ij} = a_{ji} = 0 $ $ \forall~ i,j > 1,~ i \ne j $ and $ a_{ii} = -2 $, $  a_{1i} = a_{i1} = \eta  \frac{b_{i-1}}{\vert y \vert} $ $ \forall~ 2 \le i \le L_{ki}+1 $.
where $ a > 0 $ (and we don't care about the sign of $ b_i $'s). The factor of $ \eta $ in front of $ a_{i1},a_{1i} $ is to be taken as $ \max\left( \eta,\xi \right) $, but since we can set either of them we take it simply as $ \eta $ here. The parameters $ a,b_i $ may not be completely independent of $ y $ but since the absolute value of $ y $ has been taken out of the scaling, their values are bounded. Magnitude of $ b_i $'s are bounded from above and $ a $ is bounded away from zero.
	
	We can explicitly calculate $ \vert A - \lambda I \vert $ and write the characteristic equation as
	\begin{multline}
	Q(\lambda) = \left(-\frac{a}{\vert y \vert^2} - \lambda \right) (-2-\lambda)^{L_{ki}}
	\\
	{} + {} \eta^2 \frac{\sum_{i = 1}^{L_{ki}} (-1)^{i} b_i^2}{\vert y \vert^2} (-2-\lambda)^{L_{ki} - 1}  = 0.
	\end{multline}
	So $ -2 $ is a repeated eigenvalue, $ L_{ki} - 1 $ times. The equation for the remaining two roots can be written as
	\begin{equation}
	\left(-\frac{a}{\vert y \vert^2} - \lambda \right) (-2-\lambda) + \eta^2  \frac{C}{\vert y \vert^2}  = 0.
	\end{equation}
	Necessary and sufficient conditions for both roots of this quadratic to be negative are
	\begin{equation}
	\left(2 + \frac{a}{\vert y \vert^2}\right) > 0, \qquad \frac{2a}{\vert y \vert^2} + \eta^2 \frac{C}{\vert y \vert^2} > 0,
	\end{equation}
	the first of which is always true, since $ a > 0 $. The second one can be ensured by making $ \eta $ small enough, since $ a $ is bounded away from zero and the magnitude of $ C $ is bounded from above.
	
	Hence the Hessian $ H $ is shown to be negative definite for $ \eta $ chosen to be small enough.
\end{proof}


\section{Proof of Lemma~\ref{lem_ir}}
\label{app6}

\begin{proof}
	Because of Lemma \ref{lemcmp}, the only non-zero term in $ t_{ki} $ (see \eqref{EQt}) at equilibrium is $ x_{ki} \sum_{l \in \mL_{ki}} \alpha^l_{ki} p^l_{ki} $, which is clearly non-negative. Hence $ \sum_{ki \in \mN} t_{ki} \ge 0 $ at equilibrium. 

	Now if  $ x_{ki} = 0$ then we know from Lemma \ref{lemcmp} and  \eqref{EQt} that $ t_{ki} = 0 $ and so \eqref{EQIRB} is evident. Now take $ x_{ki} > 0 $ and define the function
	\begin{equation}
	f(z) = v_{ki}(z) - z \sum_{l \in \mL_{ki}} \alpha^l_{ki} p^l_{ki}.
	\end{equation}
	Note that $ f(0) = u_{ki}(0,0) $ and $ f(x_{ki}) = u_{ki}(x,t) $, the utility at equilibrium. Since $ f^{\prime}(x_{ki}) = 0 $ (Lemma \ref{lemStat}), we see that $ \forall $ $ 0 < y < x_{ki} $, $ f^{\prime}(y) > 0 $ since $ f $ strictly concave (because of $ v_{ki} $). This clearly implies $ f(x_{ki}) \ge f(0) $.
\end{proof}


\section{Proof of Lemma~\ref{lemcmr}}
\label{app9}

	\begin{proof}
		Suppose not, i.e. assume $ \exists $ $ kj \in \mN $ such that $ \rho_{kj} \ne r $. In this case agent $ kj $ can deviate with only changing $ \rho_{kj}^{\prime} = r $ (which also means $ r $ is the same as before deviation, since demand $ y $ doesn't change). It's easy to see that this is a profitable deviation, since change in utility of agent $ kj $ will be only through the term involving $ \rho_{kj} $.
		\begin{equation}
		\Delta \widehat{u}_{kj} = - \zeta (\rho_{kj}^{\prime} - r)^2 + \zeta (\rho_{kj} - r)^2  = \zeta (\rho_{kj} - r)^2 > 0.
		\end{equation}
	\end{proof}

%

\section{Proof of Existence in Section~\ref{secmechown}}
\label{app7}

\begin{proof}
	First order conditions can again be shown to be satisfied, the only difference is that here we will also use $ \rho_{ki} = r $ at local extremum. The Hessian $ H $ for any agent $ ki $ here will be of order $ (2L_{ki}+2) \times (2L_{ki}+2) $ where 1st, 2nd row and column represent $ y_{ki} $, $ \rho_{ki} $ respectively whereas the remaining rows and columns represent $ {^1q_{ki}^l} $'s and $ {^2q_{ki}^l} $'s. The generic matrix $ A_0 $ for $ H $ will then be of the form
	$
	A_0 = \begin{bmatrix}
	A & 0 \\
	0 & D
	\end{bmatrix}
	$
	where $ D = (-2) I_{L_{ki}} $ and matrix $ A $, of order $ (L_{ki} + 2) \times (L_{ki} + 2) $, has elements $ a_{11} = -\frac{a}{\vert y \vert^2} - \zeta \frac{d}{\vert y \vert^4} $, $ a_{12} = a_{21} = - \zeta \frac{e}{\vert y \vert^2} $, $  a_{ij} = a_{ji} = 0 $ $ \forall~ i,j > 1,~ i \ne j $, $ a_{22} = -2 $, $ a_{ii} = -2 $, and $ a_{1i} = a_{i1} = \eta \frac{b_{i-1}}{\vert y \vert} $ $ \forall ~3 \le i \le L_{ki}+2 $.
	where $ a,d,e > 0 $. As before, it will be shown that all eigenvalues of $ A $ are negative (since that is clearly true for $ D $). Writing the characteristic equation, it is the case again that $ -2 $ is a repeated eigenvalue, $ L_{ki} $ times. And the equation for remaining two roots is
	\begin{multline}
	\lambda^2 + \lambda \left(2 + \frac{a}{\vert y \vert^2} +  \zeta \frac{d}{\vert y \vert^4}\right)
	\\
	{} + {} \left(\frac{2a}{\vert y \vert^2} + \zeta \frac{2d}{\vert y \vert^4} - \zeta^2 \frac{E}{\vert y \vert^4} + \eta^2 \frac{C}{\vert y \vert^2} \right) = 0
	\end{multline}
	Necessary and sufficient conditions for the roots of above quadratic to be negative are again that coefficient of $ \lambda $ and the constant term are both positive. Coefficient of $ \lambda $ is clearly positive, and the constant term can also be made positive by choosing $ \zeta,\eta $ (\& $ \xi $) small enough, irrespective of sign of $ C $. 
\end{proof}


\section{Proof of Lemma~\ref{lemSBB}}
\label{app8}

\begin{proof}
	Terms 2, 3, 4 and 5 in  \eqref{EQt21} and \eqref{EQt22} are zero at equilibrium and so we can write (at equilibrium)
	\begin{multline}
	\sum_{ki\in\mN} t_{ki} = \sum_{ki\in\mN}  {x}_{ki} \left (\sum_{l \in \mL_{ki}} \alpha^l_{ki} p_{ki}^l \right)
	\\
	{} - {} r \sum_{l \in \mL_{ki}} \frac{1}{N^l - 1} \sum_{k^{\prime} j^{\prime} \in \mN^l \backslash \{ki\} } \alpha^l_{k^{\prime} j^{\prime}} p_{k^{\prime}j^{\prime}}^l y_{k^{\prime} j^{\prime} }
	\\
	\label{EQsbb}
	= \sum_{ki\in\mN} \sum_{l\in\mL_{ki}} \left( {x}_{ki} \alpha^l_{ki} p_{ki}^l - \frac{1}{N^l - 1} \!\!\!\!\!\! \sum_{k^{\prime} j^{\prime} \in \mN^l \backslash \{ki\} } \!\!\!\!\!\!	 \alpha^l_{k^{\prime} j^{\prime}} p_{k^{\prime}j^{\prime}}^l x_{k^{\prime} j^{\prime} } \right).
	\end{multline}
	The coefficient of $ {x}_{ki} $ for any agent $ ki $ in the above is
	\begin{multline}
	\sum_{l\in\mL_{ki}} \left( \alpha^l_{ki} p_{ki}^l - \frac{1}{N^l-1} \sum_{k^{\prime} j^{\prime} \in \mN^l \backslash \{ki\} }  \alpha^l_{ki} p_{ki}^l  \right) = 0
	\end{multline}
	which proves the claim.
\end{proof}

		\bibliographystyle{IEEEtran}
		\bibliography{IEEEabrv,achilleas14abrv,abhinav}
		
\newpage 	

\begin{IEEEbiography}[{\includegraphics[width=1in,height=1.25in,clip,keepaspectratio]{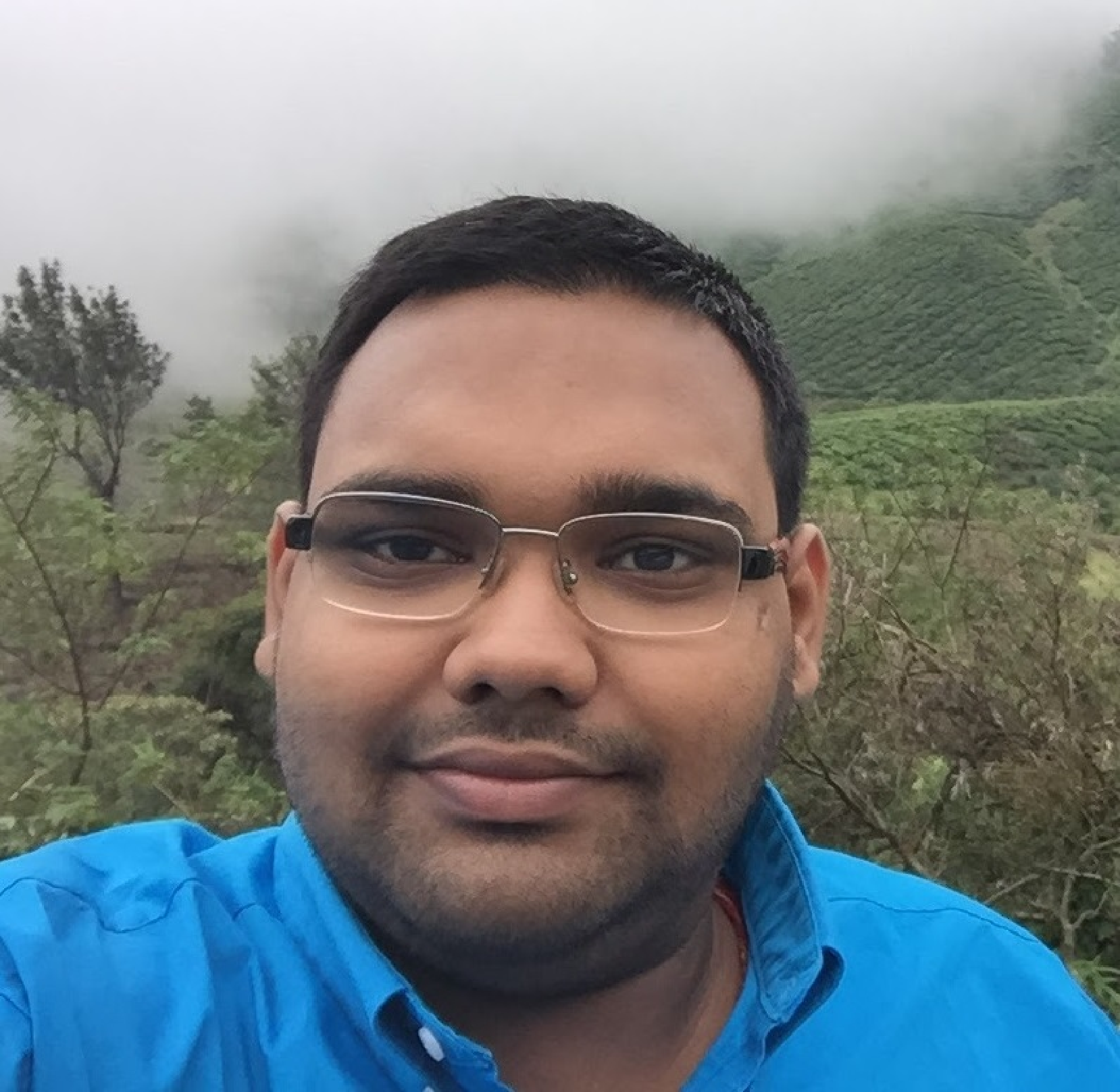}}]{Abhinav Sinha}
	was born in Mumbai, India in 1990. He received B.Tech (with honours) and M.Tech in Electrical Engineering from the Indian Institute of Technology, Bombay in 2012, and M.S.  in Mathematics from the University of Michigan in 2014. He is currently a PhD Candidate at the University of Michigan, Ann Arbor, Department of Electrical Engineering and Computer Science.
	
	His research interests lie in the general area of network communication, with emphasis in economics of networks -- analysis of dynamic games and mechanism design for resource allocation on networked systems; relation between perturbation and regularization based online learning algorithms. 
\end{IEEEbiography}

\begin{IEEEbiography}[{\includegraphics[width=1in,height=1.25in,clip,keepaspectratio]{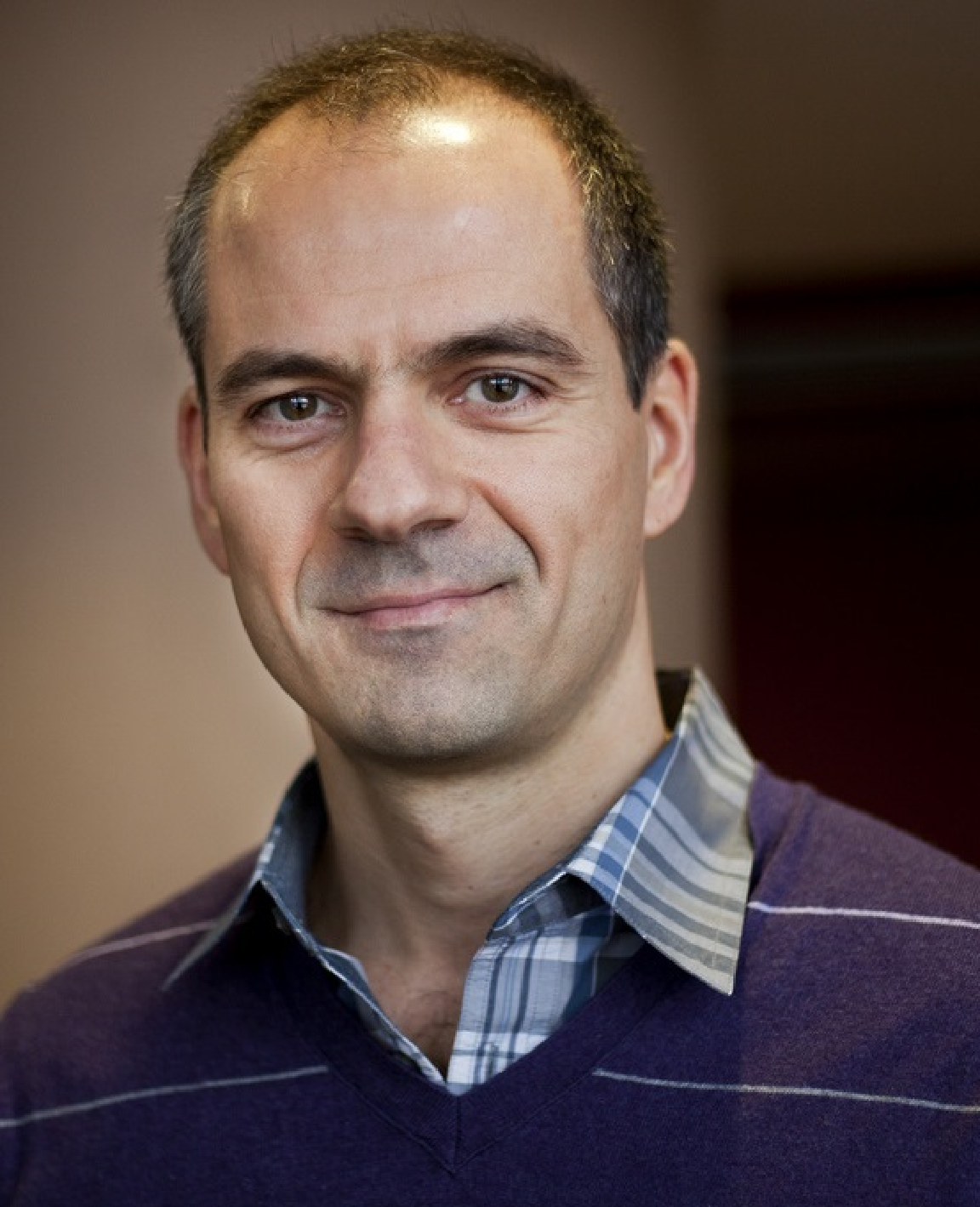}}]{Achilleas Anastasopoulos}
	(S'97-M'99-SM'13) was born in Athens, Greece in 1971. He received the Diploma
	in Electrical Engineering from the National Technical University of
	Athens, Greece in 1993, and the M.S.\ and Ph.D.\ degrees in Electrical
	Engineering from University of Southern California in 1994 and 1999,
	respectively. He is currently an Associate Professor at the University of
	Michigan, Ann Arbor, Department of Electrical Engineering and Computer
	Science.
	
	His research interests lie in the general area of communication and information theory,
	with emphasis in channel coding and multi-user channels;
	control theory with emphasis in decentralized stochastic control and its
	connections to communications and information theoretic problems;
	analysis of dynamic games and mechanism design for resource allocation on networked systems.
	
	He is the co-author
	of the book \emph{Iterative Detection: Adaptivity, Complexity Reduction,
		and Applications,} (Reading, MA: Kluwer Academic, 2001).
	
	Dr.\ Anastasopoulos is the recipient of the ``Myronis Fellowship'' in
	1996 from the Graduate School at the University of Southern California,
	the NSF CAREER Award in 2004,
	and was a co-author for the paper that received the best student paper award in ISIT 2009.
	He served as a technical program committee
	member for ICC 2003, 2015, 2016; Globecom 2004, 2012; VTC 2007, 2014, 2015; ISIT 2015, and on the editorial
	board of the IEEE TRANSACTIONS ON COMMUNICATIONS.
\end{IEEEbiography}
\vfill

\end{document}